\documentclass[10pt,final,twocolumn]{IEEEtran}

\usepackage[latin1]{inputenc}
\usepackage[english]{babel}
\usepackage{amsmath,amssymb}
\usepackage{graphicx,graphicx,verbatim}
\usepackage[hypertex]{hyperref}
\usepackage{verbatim}
\usepackage{color}

\newtheorem{thm}{Theorem}
\newtheorem{cor}{Theorem}

\newtheorem{defi}{Definition}
\newtheorem{prop}{Proposition}

\newtheorem{exx}{Example}
\newtheorem{remm}{Remark}

\newenvironment{proposition}{\begin{prop}\rm }{\hfill \hspace*{1pt} \hfill $\lrcorner$ \end{prop}}
\newenvironment{theorem}{\begin{thm}\rm }{\hfill \hspace*{1pt} \hfill $\lrcorner$ \end{thm}}
\newenvironment{definition}{\begin{defi}\rm }{\hfill \hspace*{1pt} \hfill $\lrcorner$ \end{defi}}
\newenvironment{remark}{\begin{remm}\rm }{\hfill \hspace*{1pt} \hfill $\lrcorner$\end{remm}}
\newenvironment{example}{\begin{exx}\rm }{\hfill \hspace*{1pt} \hfill $\lrcorner$ \end{exx}}
\newenvironment{proofof}{\noindent {\em Proof of }}{\hfill \hspace*{1pt}
\hfill $\blacksquare$}
\newenvironment{proof}{\noindent {\em Proof.}}{\hfill \hspace*{1pt} \hfill $\square$}

\newcommand\real{\ensuremath{{\mathbb R}}}
\newcommand\realn{\ensuremath{{\mathbb{R}^n}}}

\newcommand\mymatrix[2]{\left[\begin{array}{#1} #2 \end{array}\right]}
\newcommand{\smallmat}[1]{\left[ \begin{smallmatrix}#1
    \end{smallmatrix} \right]}

\newcommand{\calM}{\mathcal{M}}

\interdisplaylinepenalty=2500

\begin{document}

\title{A differential Lyapunov framework \\ for contraction analysis
\thanks{
F. Forni is with the Department of Electrical Engineering and Computer Science, 
University of Li{\`e}ge, 4000 Li{\`e}ge, Belgium, \texttt{fforni@ulg.ac.be}. 
His research is supported by FNRS (Belgian Fund for Scientific Research).
R. Sepulchre is with the University of Cambridge, Department of Engineering, Trumpington Street, Cambridge CB2 1PZ, and with the Department of Electrical Engineering and Computer Science, 
University of Li{\`e}ge, 4000 Li{\`e}ge, Belgium, \texttt{r.sepulchre@eng.cam.ac.uk}.
This paper presents research results of the Belgian Network DYSCO
(Dynamical Systems, Control, and Optimization), funded by the
Interuniversity Attraction Poles Programme, initiated by the Belgian
State, Science Policy Office. The scientific responsibility rests with
its authors. }}
\author{F. Forni, R. Sepulchre}
\date{\today}

\maketitle

\begin{abstract}  
Lyapunov's second theorem is an essential tool for stability analysis of differential equations.
The paper provides an analog theorem  for  incremental stability analysis by lifting the Lyapunov function
to the tangent bundle. The Lyapunov function endows the state-space with a  Finsler structure. Incremental stability is
inferred from infinitesimal contraction of the Finsler metrics through integration along solutions curves.
\end{abstract}

\section{Introduction}

At the core of Lyapunov stability theory  is the realization that a pointwise geometric condition
is sufficient to quantify how solutions of a differential equation approach a specific solution. 
The geometric condition checks that the Lyapunov function, a certain distance from a given point to the target solution,
is doomed to decay along the solution stemming from that point. By integration, the pointwise decay of the 
Lyapunov function forces the asymptotic convergence to the target solution. The basic theorem of Lyapunov has 
led to many developments over the last century, that eventually make the body of textbooks on nonlinear systems 
theory and nonlinear control \cite{Khalil3,Slotine,Isidori,Jancovic96}.
Yet many questions of nonlinear systems theory call for an incremental version of the Lyapunov stability concept, 
in which the convergence 
to a specific target solution is replaced by the convergence or \emph{contraction} between any 
pairs of solutions \cite{Angeli00,Lohmiller1998}. 
Essentially, this stronger property means that solutions forget about their initial condition.
Popular control
applications include tracking and regulation \cite{Pavlov05book,Pavlov2008}, observer design \cite{Rouchon2003,Sanfelice12}, coordination, and synchronization \cite{Wang05},
to cite a few. {Those incremental stability questions are  often reformulated as conventional stability 
questions for a suitable error system, the zero solution of the error system translating the convergence of 
two solutions to each other.} This ad-hoc remedy may be successful in specific situations but it faces unpleasant
obstacles that include both   methodological issues -- such as the issue of transforming a time-invariant 
problem into a time-variant one -- and fundamental issues -- such as the issue of defining a suitable error 
between trajectories --. Those limitations also apply to the Lyapunov characterizations of incremental stability 
that have appeared in the recent years, primarily in the important work of Angeli \cite{Angeli00}.

In a seminal paper \cite{Lohmiller1998}, 
Lohmiller and Slotine advocate a different angle of attack for nonlinear stability analysis. 
Their paper brings the attention of the control community to the basic fact that the distance 
measuring the convergence of two trajectories to each other needs not be constructed explicitly. 
Instead, it can be the integral of an infinitesimal measure of contraction. In other words, the often 
intractable construction of a distance needed for a global analysis can be substituted by a local 
construction. At a fundamental level, this approach brings differential geometry to the rescue of 
Lyapunov theory. 
The contraction concept of Lohmiller and Slotine -- sometimes called ``convergence'' 
in reference to an earlier concept of Demidovich \cite{Pavlov04} -- has been successfully used in a number of 
applications in the recent years \cite{Rouchon2003,Pavlov05book,Russo10,Wang05}. Yet, its connections 
to Lyapunov theory have been scarse,  preventing a vast body of  system theoretic tools to be 
exploited in the framework of contraction theory.

The present paper aims at bridging  Lyapunov theory and contraction theory by formulating a differential 
version of the fundamental second's Lyapunov theorem. Assuming that the state-space is a differentiable 
manifold, the classical concept of Lyapunov function in the (manifold) state-space is lifted to the tangent 
bundle. We call this lifted Lyapunov function a Finsler-Lyapunov function because it endows the differentiable 
manifold with a Finsler structure,
which is precisely what is needed to associate by integration a global 
distance (or Lyapunov function) to the local construction. We formulate a Lyapunov theorem that provides a 
sufficient pointwise geometric condition to quantify incremental stability, that is, how solutions of 
differential equations approach each other. The pointwise properties of the Finsler-Lyapunov function in the 
tangent space guarantees that a suitable (integrated) distance function decays along solutions, proving 
incremental stability.

There are a number of reasons that motivate the Finsler structure as the appropriate { 
differential structure} to study incremental stability. Primarily, it unifies the approach advocated 
by Slotine -- which equips the state-space with a Riemannian structure -- and alternative approaches 
to contraction, such as the recent approach by Russo, Di Bernardo, and Sontag \cite{Russo10,Sontag10yamamoto} based on a matrix measure for the local measure of contraction. 
Examples in the paper further suggest that the Finsler framework will allow to unify 
the application of contraction to physical systems -- typically akin to the Riemannian framework of 
classical mechanics -- and to conic applications -- typically akin to (non-Riemannian) Finsler 
metrics -- such as consensus problems or monotone systems encountered in biology.

A primary motivation to study contraction in a (differential) Lyapunov framework is to make the 
whole body of Lyapunov theory  available to contraction analysis. This is a vast program, only 
illustrated in the present paper by the very first extension of  Lyapunov theorem based on  LaSalle's  
invariance principle. Although we are not aware of a published invariance principle for contraction 
analysis, its formulation  in the proposed differential framework is a straightforward extension of 
its classical formulation and we anticipate this mere extension to be as useful for incremental 
stability analysis as it is for classical Lyapunov stability analysis.

We also include in this paper an extension of the basic theorem to the weaker notion of \emph{horizontal} 
contraction. Horizontal contraction is weaker than contraction in that the pointwise decay of the 
Finsler-Lyapunov function is verified only in a subspace -- called the horizontal subspace -- of 
the tangent space. Disregarding contraction in specific directions is a convenient way to take into 
account symmetry directions along which no contraction is expected. This weaker notion of contraction 
is adapted to many physical systems and to many applications where contraction theory has proven useful, 
such as tracking, observer design, or synchronization. 
Those applications involve one or several copies 
of a given system and only the contraction between the copies and the system trajectories is of interest.

The rest of the paper is organized as follows.
The notation is summarized in Section 
\ref{sec:notation}. Sections \ref{sec:incrementa_stability}, 
\ref{sec:FinsLyap_functs}, \ref{sec:FinsLyap_thm} contain
the core of the differential framework through the introduction 
of the main definitions, results, and related examples.
A detailed comparison with the existing literature is proposed
in Section \ref{sec:literature_comparison}.
Finally, LaSalle's invariance principle and horizontal contraction 
are presented in Sections \ref{sec:LaSalle} and
\ref{sec:horizontal_contraction}, respectively. Conclusions follow.

\section{Notation and preliminaries}
\label{sec:notation}

We present the differential framework on general manifolds by adopting the notation used in 
\cite{AbsMahSep2008} and \cite{DoCarmo1992}.
A \emph{($d$-dimensional) manifold $\mathcal{M}$} is a couple $(\mathcal{M},\mathcal{A}^+)$
where $\mathcal{M}$ is a set and $\mathcal{A}^+$ is a maximal atlas of $\mathcal{M}$
into $\real^d$, such that the topology induced by $\mathcal{A}^+$ is Hausdorff and 
second-countable.
We denote the \emph{tangent space} of $\mathcal{M}$ at $x\in\mathcal{M}$ by $T_x\mathcal{M}$,
and the 
\emph{tangent bundle} of $\mathcal{M}$ by
$T\mathcal{M}=\bigcup_{x\in\mathcal{M}} \{x\}\times T_x\mathcal{M}$.

Given two smooth manifolds $\mathcal{M}_1$ and $\mathcal{M}_2$ of dimension $d_1$ and $d_2$ respectively, 
consider a function $F:\mathcal{M}_1 \to \mathcal{M}_2$ and a point $x\in \mathcal{M}_1$, and 
consider two charts $\varphi_1:\mathcal{U}_x\subset\mathcal{M}_1\to\real^{d_1}$ and $\varphi_2:\mathcal{U}_{F(x)}\subset\mathcal{M}_2\to\real^{d_2}$ 
defined on neighborhoods of $x$ and $F(x)$. We say that $F$ is
of class $C^k$, $k\in\mathbb{N}$,  if the function $\hat{F} = \varphi_2 \circ F \circ \varphi_1^{-1}:\real^{d_1}\to\real^{d_2}$ is of class $C^k$.
We say that $F$ is smooth (i.e. of class $C^\infty$) if $\hat{F}$ is smooth.
The \emph{differential of $F$ at $x$} is denoted by $DF(x)[\cdot]:T_x \mathcal{M}_1 \!\to\! T_{F(x)}\mathcal{M}_2$.
It maps each tangent vector 
$\delta x\in T_x \mathcal{M}_1$ to $DF(x)[\delta x] \in T_{F(x)}\mathcal{M}_2\,$\footnote{
We underline that the syntax $DF(x)[v]$ follows the intuitive meaning of 
\emph{D}ifferential of a \emph{function} $F:\mathcal{M}_1 \to \mathcal{M}_2$, 
\emph{computed} at $x\in \mathcal{M}$ and \emph{applied} to the
tangent vector $v\in T_x\mathcal{M}$. $DF(x)[v]$ is replaced by more compact 
expressions like $dF_xv$ or $F_{*x}v$ in many textbooks. However, we found
that the adopted notation makes the calculations more readable because of the 
clear distinction of the three elements $F$, $x$ and $v$.}. 

{
Given a manifold $\mathcal{M}$ of dimension $d$,
to each chart $\varphi:\mathcal{U}\subset\mathcal{M} \to \real^d$
there corresponds a natural chart for $T\mathcal{M}$ given by
$(\varphi(\cdot), D\varphi(\cdot)[\cdot]) : T\mathcal{U}\subset T\mathcal{M} \to \real^d\times\real^d$.
In particular, for every $x\in\mathcal{M}$, 
let $E_i $ be the $i$-the vector of th canonical basis of $\real^d$, 
then $\{D\varphi^{-1}(\varphi(x))[E_1],\dots,D\varphi^{-1}(\varphi(x))[E_d]\}$
is the natural basis of $T_x \mathcal{M}$.
}

A \emph{curve} $\gamma$ on a given manifold $\mathcal{M}$, is a mapping $\gamma :I \subset \real \to \mathcal{M}$.
A \emph{regular curve} satisfies $D\gamma(s)[1]\neq 0$ for each $s\in I$.
For simplicity we sometime use
$\dot{\gamma}(s)$ or $\frac{d\gamma(s)}{ds}$ to denote $D\gamma(s)[1]$.
Following \cite[Appendix A]{Isidori}, given a $C^1$ and time varying \emph{vector field} $f$ 
{ on the manifold $\mathcal{M}$},
which assigns to each point $x\in \mathcal{M}$ a tangent vector $f(t,x)\in T_x\mathcal{M}$ at time $t$, 
a $C^1$ curve $\gamma:I\to\mathcal{M}$ is an \emph{integral curve} of $f$ if 
$D\gamma(t)[1] = f(t,\gamma(t))$ for each $t\in I$. We say that 
a curve $\gamma:I\to\mathcal{M}$ is a \emph{solution} to the differential equation  
$\dot{\gamma}=f(t,\gamma)$ on $\mathcal{M}$ if $\gamma$ is an integral curve of $f$.

Throughout the paper we adopt the following notation.
$I_n$ denotes the identity matrix of dimension $n$.
Given a vector $v$, $v^T$ denotes the transpose vector 
of $v$. The \emph{span} of a set of vectors $\{v_1,\dots,v_n\}$ is given by
$\mathrm{Span}(\{v_1,\dots,v_n\}) := \{v\,|\,\exists \lambda_1,\dots\lambda_n\in\real \mbox{ s.t. } v = \sum_{i=1}^n \lambda_i v_i\}$.
Given a constant $c\in \real$ we write $\real_{\geq c}$ to denote
the subset of $[c,\infty)\subset\real$. A locally Lipschitz function 
$\alpha:\real_{\geq 0}\rightarrow\real_{\geq 0}$ is said 
to belong to \emph{class} $\mathcal{K}$ if it is strictly increasing and $\alpha(0) = 0$;
it belongs to \emph{class} $\mathcal{K}_{\infty}$ if, moreover, $\lim_{r \rightarrow+\infty}\alpha(r)=+\infty$.
A function 
$\beta:\real_{\geq 0}\times \real_{\geq 0}\to \real_{\geq 0}$ is said
to belong to \emph{class} $\mathcal{KL}$ if 
(i) for each $t \geq 0$, $\beta(\cdot,t)$ is a $\mathcal{K}$ function,
and (ii) for each $s\geq 0$, $\beta(s,\cdot)$ is nonincreasing and 
$\lim_{t\to \infty}\beta(s,t)=0$. 

A \emph{distance} (or \emph{metric}) $d:\mathcal{M}\times\mathcal{M}\to \real_{\geq 0}$ 
on a manifold $\mathcal{M}$ is a positive function that satisfies
$d(x,y) = 0$ if and only if $x=y$, for each $x,y\in \mathcal{M}$ and 
$d(x,z) \leq d(x,y) + d(y,z)$ for each $x,y,z\in \mathcal{M}$.
Throughout the paper we assume that $d$ is continuous with respect to
the manifold topology.
Given a set $\mathcal{S}\subset\mathcal{M}$ 
we say that $\mathcal{S}$ is bounded if $\sup_{x,y\in\mathcal{S}} d(x,y) < \infty$ for any given
distance $d$ on $\mathcal{M}$. The distance between a set $\mathcal{S}$ and a point $x$
is given by ${d}(\mathcal{A},x) := \sup_{y\in \mathcal{A}} {d}(y,x)$.
We say that a curve $\gamma:I\to\mathcal{M}$ is bounded
if its range is bounded.
Given two functions $f:\mathcal{Z}\to\mathcal{Y}$ and $g:\mathcal{X}\to\mathcal{Z}$, 
the \emph{composition} $f\circ g$ assigns to each  
each $p\in\mathcal{X}$ the value $f\!\circ\! g(p) = f(g(p)) \in \mathcal{Y}$.
Given a function $f:\real^n\to\real^m$ where we denote the (matrix of) partial derivatives by 
$\frac{\partial f(x)}{\partial x}$ and we write $\frac{\partial f(x)}{\partial x}_{|x=y}$
for the partial derivatives computed at $y\in\realn$.

\section{Incremental stability and contraction}
\label{sec:incrementa_stability}
{ 
Consider a manifold $\mathcal{M}$} and a differential equation 
\begin{equation}
\label{eq:sys}
 \dot{x} = f(t,x) 
\end{equation}
where $f$ is a $C^1$ vector field { which maps each $(t,x)\in\real\times\mathcal{M}$
to a tangent vector $f(t,x)\in T_x\mathcal{M}$.} 
We denote by $\psi_{t_0}(\cdot,x_0)$ the solution to \eqref{eq:sys}
from the initial condition $x_0\in\mathcal{M}$ at time $t_0$, that is, $\psi_{t_0}(t_0,x_0)=x_0$. 
Throughout the paper, following \cite{Sontag10yamamoto},
we simplify the exposition by considering \emph{forward 
invariant} and connected subsets $\mathcal{C} \subset \mathcal{M}$ 
for \eqref{eq:sys} such that $\psi_{t_0}(\cdot,x_0)$ is \emph{forward complete}
for every $x_0\in \mathcal{C}$, that is,
$\psi_{t_0}(t,x_0) \in \mathcal{C}$ for each $t_0$ and each $t\geq t_0$. 
{For simplicity of the exposition,} we also assume that every two points in $\mathcal{C}$
can be connected by a smooth curve {$\gamma:I\to \mathcal{C}$}.

The following definition characterizes several notions of \emph{incremental stability}:
\begin{definition}
\label{def:incremental_stability}
Consider the differential equation \eqref{eq:sys} on a given manifold $\mathcal{M}$. 
Let $\mathcal{C}\subset \mathcal{M}$ be a forward invariant set
 and $d : \calM \times \calM \to \real_{\geq 0}$ a continuous distance on $\calM$. 
The system \eqref{eq:sys} is 
\begin{description}
\item[\hspace{-0mm}\emph{(IS)}]\emph{incrementally stable} on $\mathcal{C}$ 
(with respect to $d$)
 if there exists a $\mathcal{K}$ function $\alpha$ such that { 
 $\forall x_1,x_2\in\mathcal{C}$  , $\forall t_0 \in \real$, $\forall t\geq t_0$,}
 \begin{equation}
 d(\psi_{t_0}(t,x_1),\psi_{t_0}(t,x_2)) \leq \alpha(d(x_1,x_2))\ ; 
 \end{equation}
 \item[\emph{(IAS)}]\emph{incrementally asymptotically stable} on $\mathcal{C}$
 if it is incrementally stable and 
  { $\forall x_1,x_2\in\mathcal{C}$, $\forall t_0 \in \real$ },
 \begin{equation}
 \label{eq:ias}
 \lim\nolimits\limits_{t\to \infty} d(\psi_{t_0}(t,x_1),\psi_{t_0}(t,x_2)) = 0 \; 
\end{equation}
 \item[\emph{(IES)}]\emph{incrementally exponentially stable} on $\mathcal{C}$ 
 if there exist a distance $d$, $K\geq 1$, and $\lambda>0$ such that 
 { $\forall x_1,x_2\in\mathcal{C}$, $\forall t_0 \in \real$, $\forall t\geq t_0$,}
\begin{equation} 
\label{eq:exp_stability}
 d(\psi_{t_0}(t,x_1),\psi_{t_0}(t,x_2)) \leq K e^{-\lambda(t-t_0)} d(x_1,x_2). 
 \end{equation}
 \vspace{-6mm}
\end{description}
\end{definition}
These definitions are incremental versions of classical notions of 
stability, asymptotic stability and exponential stability \cite[Definition 4.4]{Khalil3}, and they 
reduce to those notions the metric space $(\calM,d)$ is complete and 
when either $x_1$ or $x_2$ is an equilibrium of \eqref{eq:sys}.
\emph{Global}, \emph{regional}, and \emph{local} notions 
of stability are specified through the definition of the set $\mathcal{C}$. For example, we say
that \eqref{eq:sys} is incrementally globally asymptotically stable when $\mathcal{C}=\mathcal{M}$.
Note that both (IS) and (IES) properties are uniform with respect to $t_0$.

For $\mathcal{M}=\realn$ and for distances given by norms on $\realn$, the notions
of incremental stability and incremental asymptotic stability given above are equivalent to
the notions of incremental stability and attractive incremental stability of 
\cite[Definition 6.22]{Leine08}, respectively. For $\mathcal{C} = \realn$, the notion of
incremental asymptotic stability is weaker than the notion of incremental global asymptotic stability 
of \cite[Definition 2.1]{Angeli00}, since the latter requires uniform attractivity.

Incremental stability of a dynamical system has been previously characterized by
a suitable extension of Lyapunov theory \cite{Angeli00}. For $\mathcal{M}=\realn$, 
the existence of a Lyapunov function decreasing along any pair of solutions 
is a sufficient condition for incremental stability \cite[Theorem 6.30]{Leine08}. 
The key fact is in recognizing the equivalence between 
the incremental stability of $\dot{x} =  f(t,x)$, $x\in \realn$,
and the stability of the set $\mathcal{A} := \{(x_1,x_2)\in \real^{2n}\,|\, x_1=x_2\}$ 
for the extended system $\dot{x}_1 =  f(t,x_1)$,
$\dot{x}_2 =  f(t,x_2)$. As a direct consequence, incremental asymptotic stability
is inferred from the existence of a Lyapunov function $V(x_1,x_2$) for the set $\mathcal{A}$
with (uniformly) negative derivative along the vector field $f(t,x_1)$, $f(t,x_2)$, for any pair $x_1,x_2$. 
The extension to general manifolds is immediate.

\section{Finsler-Lyapunov functions}
\label{sec:FinsLyap_functs}

This section introduces a concept of Lyapunov function in the tangent bundle $T\mathcal{M}$ of a manifold $\mathcal{M}$.
{
\begin{definition}
\label{def:LyapFins_function}
 Consider a manifold $\mathcal{M}$. 
 A $C^1$ function $V:T\mathcal{M} \to \real_{\geq 0}$ that maps every 
 $(x,\delta x) \in T\mathcal{M}$ to $V(x,\delta x)\in \real_{\geq 0}$, 
 is a \emph{candidate Finsler-Lyapunov function for} \eqref{eq:sys} if
 there exist $c_1,c_2\in \real_{\geq 0}$, $p\in\real_{\geq 1}$, 
 and (a Finsler structure) $F:T\mathcal{M} \to \real_{\geq 0}$ such that,
 $\forall (x,\delta x) \in T\mathcal{M}$,
 \begin{equation}
 \label{eq:FinsLyap_bounds}
  c_1 F(x,\delta x)^p  \leq  V(x,\delta x) \leq  c_2 F(x,\delta x)^p.
 \end{equation}
$F$ satisfies the  following conditions:
 \begin{itemize}
  \item[(i)] $F$ is a $C^1$ function for each $(x,\delta x)\in T\mathcal{M}$ such that $\delta x \neq 0$;
  \item[(ii)] $F(x,\delta x) >0$ for each $(x,\delta x)\in T\mathcal{M}$ such that $\delta x \neq 0$;
  \item[(iii)] $F(x,\lambda \delta x) = \lambda F(x,\delta x)$ for each $\lambda \geq 0$
  and each $(x,\delta x)\in T\mathcal{M}$ (homogeneity); 
  \item[(iv)] $F(x,\delta x_1+\delta x_2) < F(x,\delta x_1) + F(x,\delta x_2)$ for each 
  $(x,\delta x_1),(x,\delta x_2) \in T\mathcal{M}$ such that 
  $\delta x_1\neq \lambda \delta x_2$ for any given  $\lambda \in \real$ (strict convexity).
 \end{itemize}  \vspace{-5mm}
\end{definition}
}
For each $x\in\mathcal{M}$, $V$ is a
measure of the length of the tangent vector $\delta x\in T_x\mathcal{M}$. 
The reason to call such a function 
$V$ a ``Finsler-Lyapunov function'' is that it 
combines the properties of a Lyapunov function and of a Finsler structure. 
The connection with classical Lyapunov functions is at methodological level: 
a candidate Finsler-Lyapunov function $V$ is an abstraction on the system tangent bundle $T \mathcal{M}$,
used to characterize the asymptotic behavior of the system trajectories by looking directly at the vector field $f(t,x)$. { Indeed, $V$ will be used as a  Lyapunov function for the
variational system associated to \eqref{eq:sys}.}
\eqref{eq:FinsLyap_bounds}, combined to the fact that 
$F(x,\cdot)$ defines an asymmetric norm $|\cdot|_x := F(x,\cdot)$ 
in each tangent space $T_x\mathcal{M}$,
 emphasizes the analogies between Finsler-Lyapunov functions and 
classical Lyapunov functions.
Note that the continuous differentiability of $V$ can be relaxed 
as in classical Lyapunov theory, see Remark \ref{rem:piecewise_diff} below. 
 In a similar way, the restriction to time-invariant functions $V$ is only for notational 
convenience but all the results
of the paper extend in a straightforward manner to
time-varying functions $V$ \footnote{{Except Section \ref{sec:LaSalle}, where
the extension requires time periodicity, as in classical LaSalle relaxations of Lyapunov Theory.}}. 

The connection with Finsler structures is 
provided by Items (i)-(iv), which make
$F$ a Finsler structure on $\mathcal{M}$ \cite{Tamassy08}.
Positiveness, homogeneity, and strict convexity of $F$ guarantee 
that $F(x,\cdot)$ is a (possibly asymmetric) Minkowski norm in each tangent space.
Thus, the length of any curve $\gamma$ {induced} by $F$ is independent on 
orientation-preserving  reparameterizations of $\gamma$.

The relation \eqref{eq:FinsLyap_bounds} 
between a candidate Finsler-Lyapunov function $V$
and the associated Finsler structure $F$ 
is a key property for the deduction of incremental stability.
This is because {$F$ induces} a well-defined distance on $\mathcal{M}$
via integration. Following \cite[p.145]{Shen00},
\begin{definition}
\label{def:distance}[\emph{Finsler distance}]
Consider a candidate Finsler-Lyapunov function $V$ on the manifold $\mathcal{M}$
{ and the associated Finsler structure $F$ in Definition \ref{def:LyapFins_function}}.
For any subset $\mathcal{C} \subset \mathcal{M}$ and any two points 
$x_1,x_2\in \mathcal{M}$, let $\Gamma(x_1,x_2)$ be the collection of
piecewise $C^1$ curves $\gamma:I \to \mathcal{C}$,
$I:=\{s \in \real \,|\,0 \leq s \leq 1\}$,  $\gamma(0) = x_1$, and 
$\gamma(1) = x_2$. 

The \emph{distance} (or metric) $d:\mathcal{M}\times \mathcal{M} \to \real_{\geq 0}$ 
{induced by $F$} satisfies
\begin{equation}
 \label{eq:induced_distance_main}
 d(x_1,x_2) := \inf_{\Gamma(x_1,x_2)} \int_I F(\gamma(s),\dot{\gamma}(s)) ds.
 \vspace{-5mm}
\end{equation} \end{definition}
We consider curves whose domain is restricted to $0 \leq s \leq 1$ because
any distance {induced by $F$} is independent from any orientation-preserving 
reparameterization of curves. 
With a slight abuse of notation, in \eqref{eq:induced_distance_main} 
we write $\dot{\gamma}(s)=D\gamma(s)[1]$ to denote the directional derivative 
of a given piecewise $C^1$ function 
$\gamma$ at $s$, implicitly assuming that the differential 
is computed only where the function is differentiable. 
Points of non-differentiability 
characterize a set of measure zero, which can be neglected at integration.

\begin{example}
We review specific classes of candidate Finsler-Lyapunov
functions and classical distance functions.
Consider $\mathcal{C}= \mathcal{M} = \realn$ (for simplicity) and 
consider the Riemannian structure $\langle \delta x_1, \delta x_2\rangle_x :=\delta x_1^TP(x)\delta x_2$ 
for each $x\in\mathcal{M}$ and each $\delta x_1, \delta x_2\in T_x\mathcal{M}$, 
where $P(x)$ is a symmetric and positive definite matrix in $\real^{n\times n}$ for each $x\in \mathcal{M}$. 
Then, the function $V: T\mathcal{M}\to \real_{\geq 0}$  
given by $V(x,\delta x) := \langle \delta x, \delta x\rangle_x$ 
satisfies the conditions of Definition \ref{def:LyapFins_function}. Moreover,  
from Definition \ref{def:distance}, the distance induced by { $F=\sqrt{V}$} 
is given by the length of the geodesic connecting $x_1$ and $x_2$. 

For the particular selection $P(x) = I$, $V(x,\delta x)$ reduces to $|\delta x|^2_2$. 
Thus,
$d(x_1,x_2) = \int_0^1 \left|\frac{\partial\gamma(s)}{\partial s}\right|_2ds$
where ${\gamma}$ is the straight line $\gamma(s):=(1-s)x_1 + s x_2$.
Therefore, $d(x_1,x_2) = \int_0^1 |x_2-x_1|_2 ds =  |x_1-x_2|_2$.
Note that for distances $d$ given by $k$-norms $d(x_1,x_2) := |x_1-x_2|_k$, where $k \in \mathbb{N}$, 
$k\neq 2$, and $x_1,x_2\in\mathcal{M}$, a quadratic Finsler-Lyapunov function $V$ 
(i.e. $F$ given by a Riemannian structure) is too restrictive. 
Nevertheless, taking $V(x,\delta x) := |\delta x|_k$, we have that $d(x_1,x_2) = |x_1-x_2|_k$.
\end{example}

{ 
\begin{example}
We illustrate 
the importance of the relation \eqref{eq:FinsLyap_bounds}
between Finsler-Lyapunov functions $V$ and Finsler structures $F$. 
As a first example, consider the manifold $\mathcal{M}=\real^2$ and take 
$V(x,\delta x) = 1$ for each $x\in \real^2$ and $\delta x\in\real^2$.
Clearly, a function $F$ that satisfies
Items (i)-(iv) in Definition \ref{def:LyapFins_function} and \eqref{eq:FinsLyap_bounds}
does not exist.
However, mimicking \eqref{eq:induced_distance_main}, we could consider 
the following notion of ``distance'' based on $V$,
$d(x_1,x_2) := \inf_{\Gamma(x_1,x_2)} \int_I V(\gamma(s),\dot{\gamma}(s)) ds$.
Given any to points $x_1,x_2\in\mathcal{M}$, consider a generic curve 
$\gamma:I\subset\real_{\geq 0}\to\mathcal{M}$, $I=[0,1]$, such that $\gamma(0)=x_1$ and $\gamma(1) = x_2$. 
Then,  
$\int_I V(\gamma(s),\dot{\gamma}(s)) = \int_I 1 ds = 1$. 
Consider now a reparameterization of $\gamma$ given by $\gamma_k:I_k\to \mathcal{M}$, $I_k=[0,\frac{1}{k}]$, 
such that $\gamma_k(0)=x_1$ and  $\gamma_k(\frac{1}{k})=x_2$ for any $k>1$.
By definition, we get that 
$d(x_1,x_2) \leq \lim\nolimits\limits_{k\to\infty} \int_{I_k} 1 ds = \lim\nolimits\limits_{k\to\infty} \frac{1}{k} = 0$, 
for any given $x_1,x_2\in\mathcal{M}$. Thus, $d$ is non-negative and 
satisfies the triangle inequality but
$d(x_1,x_2)=0 $ for $x_1\neq x_2$. Therefore, $d$ is not a distance. 
Note that a similar argument extends to 
$V(x,\delta x) = W(x)$ where $W(x)$ is a positive and continuously differentiable function.

As a second example, consider the simplified setting $\mathcal{M}=\real$.
Given the points $0$ and $1$, 
consider the curve $\gamma_k(s): [0,\frac{1}{k}] \to \real$  such that $\gamma_k(s) = ks$,
$k\in \mathbb{N}_{\geq 1}$. The function $V(x,\delta x) := |\delta x|^{p_1} + |\delta x|^{p_2}$
is a candidate Finsler-Lyapunov function only if $p_1=p_2$, with 
Finsler structure $F$ given by $F(x,\delta x) = |\delta x|$.
Otherwise, a function $F$ that satisfies \eqref{eq:FinsLyap_bounds} 
and the homogeneity property in (iii) does not exists.
As above, integrating $V$ does not provide a distance. 
For instance, for any given $p$, and any given $p_1$ and $p_2$, we have that 
$ \int_0^{\frac{1}{k}}V(\gamma_k(s),\dot{\gamma}_k(s))^{\frac{1}{p}} 
= \int_0^{\frac{1}{k}}\left(k^{p_1} + k^{p_2}\right)^{\frac{1}{p}}ds
= \frac{1}{k}\left(k^{p_1} + k^{p_2}\right)^{\frac{1}{p}}$ 
which preserves a constant value for any given reparameterization 
$\gamma_k$ only when $p=p_1=p_2$. 
\end{example}
}
 The reader will notice that 
 the distance $d$ {induced by the Finsler structure $F$ associated to a candidate Finsler-Lyapunov  function \eqref{eq:FinsLyap_bounds}  is not symmetric in general, that is,
 we may have  $d(x,y) \neq d(y,x)$ for some $x,y \in \mathcal{M}$. To induce a symmetric distance,
  it is sufficient to strengthen (iii) in Definition \ref{def:LyapFins_function} to 
 {
 (iii)$_b$ $F(x,\lambda \delta x) = |\lambda| F(x,\delta x)$ for each $\lambda$, 
 and each $(x,\delta x) \in T\mathcal{M}$} (absolute homogeneity, \cite{Tamassy08}). 
 Note that adopting (iii)$_b$ 
 reduces the generality of the class of Finsler-Lyapunov functions excluding, for example, 
 Randers metrics \cite[Section 1.3]{Shen00}.

\section{A Finsler-Lyapunov theorem \\ for contraction analysis}
\label{sec:FinsLyap_thm}

{

Consider a manifold $\mathcal{M}$ of dimension $d$.
In what follows, we exploit the manifold structure of the tangent bundle $T\mathcal{M}$ to provide 
geometric conditions for contraction in local coordinates.
Any given chart $\varphi : \mathcal{U}\subseteq \mathcal{M} \to \real^d$
induces a natural chart on $T\mathcal{U}\subseteq T\mathcal{M}$ (see Section \ref{sec:notation})
that maps each point $(x,\delta x)\in T\mathcal{M}$ to its 
coordinate representation 
$(x_\ell,\delta x_\ell) := (\varphi(x), D\varphi(x)[\delta x]) \in \real^d \times \real^d$.
In local coordinates \eqref{eq:sys} is represented by
$\dot{x}_\ell = f_\ell(t,x_\ell)$ where $f_\ell:\real^d \to \real^d$ is given by
$f_\ell(t,x_\ell) = D\varphi(x)[f(t,x)]$ at $x = \varphi^{-1}(x_\ell)$.
In a similar way, the chart representation $V_\ell(x_\ell,\delta x_\ell)$
of a Finsler-Lyapunov function $V$ is given by
$V(x,\delta x)$ computed at $(x,\delta x) = (\varphi^{-1}(x_\ell), D\varphi^{-1}(x_\ell)[\delta x_\ell])$. 
With a slight abuse of notation, in what follows we drop the subscript $\ell$. 
}

\begin{theorem}
 \label{thm:inf_contraction}
Consider the system \eqref{eq:sys} on a smooth manifold $\mathcal{M}$ with $f$ of class $C^2$,
a connected and forward invariant set $\mathcal{C}$, 
and a function ${\alpha}:\real_{\geq 0} \to \real_{\geq 0}$.
Let $V$ be a candidate Finsler-Lyapunov function such that, in coordinates, 
{
\begin{equation}
 \label{eq:diff_contraction}
  \frac{\partial V(x,\delta x)}{\partial x} f(t,x) + 
  \frac{\partial V(x,\delta x) }{\partial \delta x} \frac{\partial f(t,x)}{\partial x}\delta x \leq -{\alpha}(V(x,\delta x)) 
\end{equation}
}
for each { $t\in \real$}, $x\in \mathcal{C}\subseteq \mathcal{M}$, 
and $\delta x \in T_x \mathcal{M}$. Then, \eqref{eq:sys} is
\begin{description}
 \item[\emph{ (IS)}] incrementally stable on $\mathcal{C}$ if $\alpha(s) = 0$ for each $s\geq 0$;
 \item[\emph{(IAS)}] incrementally asymptotically stable on $\mathcal{C}$ if $\alpha$ is a $\mathcal{K}$ function;
 \item[\emph{(IES)}] incrementally exponentially stable on $\mathcal{C}$ if $\alpha(s) = \lambda s> 0$ for each $s> 0$.
\end{description} \vspace{-5mm}
\end{theorem}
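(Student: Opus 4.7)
The plan is to read (\ref{eq:diff_contraction}) not as a condition on $V$ alone but as a classical Lyapunov decay condition for $V$ along the \emph{prolonged} (variational) system
\begin{equation*}
\dot{x} = f(t,x), \qquad \dot{\delta x} = \frac{\partial f(t,x)}{\partial x}\,\delta x,
\end{equation*}
on $T\mathcal{M}$. Indeed, the left-hand side of (\ref{eq:diff_contraction}) is exactly $\frac{d}{dt} V(x(t),\delta x(t))$ along any integral curve of this prolonged flow. The comparison lemma applied to $\dot{V} \leq -\alpha(V)$ then yields
\begin{equation*}
V(x(t),\delta x(t)) \leq \beta\!\bigl(V(x(t_0),\delta x(t_0)),\, t-t_0\bigr),
\end{equation*}
where $\beta(r,\tau)=r$ when $\alpha\equiv 0$, $\beta\in \mathcal{KL}$ when $\alpha\in\mathcal{K}$, and $\beta(r,\tau)=e^{-\lambda \tau}r$ when $\alpha(s)=\lambda s$. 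This step is coordinate-intrinsic because the prolonged flow is the canonical lift of $f$ to $T\mathcal{M}$.

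To pass from this tangent-bundle bound to an incremental bound on $\mathcal{M}$, I propagate curves along the flow. Fix $x_1,x_2\in\mathcal{C}$ and a regular piecewise $C^1$ curve $\gamma_0\in\Gamma(x_1,x_2)$, and set $\gamma(s,t) := \psi_{t_0}(t,\gamma_0(s))$ for $(s,t)\in[0,1]\times[t_0,\infty)$. Since partial derivatives commute, $\partial_s\gamma(s,t)$ satisfies the variational equation along $\gamma(s,\cdot)$, so the estimate above gives, for every $s$,
\begin{equation*}
V\!\bigl(\gamma(s,t),\partial_s\gamma(s,t)\bigr) \leq \beta\!\bigl(V(\gamma_0(s),\dot\gamma_0(s)),\,t-t_0\bigr).
\end{equation*}
Combining with the Finsler sandwich $c_1 F^p \leq V \leq c_2 F^p$ from Definition \ref{def:LyapFins_function}, integrating over $s\in[0,1]$, and noting that $\gamma(\cdot,t)$ connects $\psi_{t_0}(t,x_1)$ to $\psi_{t_0}(t,x_2)$, gives an upper bound for the length of $\gamma(\cdot,t)$ in terms of $\gamma_0$. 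Taking the infimum over $\gamma_0\in\Gamma(x_1,x_2)$ on the right-hand side and invoking (\ref{eq:induced_distance_main}) then produces a distance estimate between $\psi_{t_0}(t,x_1)$ and $\psi_{t_0}(t,x_2)$.

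The three cases now follow from the corresponding shape of $\beta$: for $\alpha\equiv 0$ one obtains $d(\psi_{t_0}(t,x_1),\psi_{t_0}(t,x_2))\leq (c_2/c_1)^{1/p} d(x_1,x_2)$, giving (IS) with $\alpha_{\text{IS}}(r)=(c_2/c_1)^{1/p}r$; for $\alpha(s)=\lambda s$ the exponential $e^{-\lambda(t-t_0)/p}$ emerges as a multiplicative factor, yielding (IES); and (IAS) is obtained by combining (IS) with $\lim_{t\to\infty} V\to 0$ furnished by $\beta\in\mathcal{KL}$. The step I expect to be most delicate is the last one: pointwise $\mathcal{KL}$ decay of the integrand in $s$ must be converted into convergence of the integrated length (and hence, after the $\inf$, of the distance) to zero. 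I plan to handle this by dominated convergence, using the fact that $\beta(r,\tau)$ is nonincreasing in $\tau$, so the time-$t_0$ integrand $c_1^{-1/p}\beta(c_2 F(\gamma_0(s),\dot\gamma_0(s))^p,0)^{1/p}$ serves as a fixed integrable majorant along any admissible $\gamma_0$. A secondary technical issue is that $F$ is only $C^1$ away from the zero section, but this is harmless because the infimum in (\ref{eq:induced_distance_main}) is realized as a limit over regular curves, and the variational equation being a linear ODE in $\delta x$ preserves non-vanishing of $\partial_s\gamma$ along the flow. Finally, the whole argument is carried out in the natural chart on $T\mathcal{M}$ induced by any chart of $\mathcal{M}$, and patched along curves to cover the connected set $\mathcal{C}$.
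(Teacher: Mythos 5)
Your proposal is correct and follows essentially the same route as the paper's proof: commuting $\partial_t$ and $\partial_s$ to show the virtual displacement obeys the variational equation, applying the comparison lemma to get the $\mathcal{KL}$/exponential bound on $V$ along the prolonged flow, integrating the Finsler sandwich over $s$, and using dominated convergence for the (IAS) limit. The only cosmetic difference is that you take the infimum over all connecting curves at the end, whereas the paper fixes a single $\varepsilon$-near-optimal smooth curve up front and lets $\varepsilon\to 0$; both handle the infimum equivalently.
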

We say that the system \eqref{eq:sys} \emph{contracts} $V$ in 
$\mathcal{C}$ if \eqref{eq:diff_contraction} is satisfied for some function 
$\alpha$ of class $\mathcal{K}$.  $V$ is called the
\emph{contraction measure}, and $\mathcal{C}$ the \emph{contraction region}.

The conditions of the theorem for incremental stability are reminiscent of 
classical Lyapunov conditions for stability, asymptotic stability and exponential 
stability \cite[Chapter 4]{Khalil3},
lifted to the tangent bundle  $T\mathcal{M}$. 
{
 In fact, \eqref{eq:diff_contraction} guarantees that $V$ decreases
along the trajectories of the variational system (in coordinates)
$\dot{x} = f(x)$, $\dot{\delta x} = \frac{\partial f(x)}{\partial x}\delta x$.
The reader will notice that along any solution $\psi_{t_0}(t,x_0)$ to \eqref{eq:sys},
$\dot{\delta x} = \left[\frac{\partial f(x)}{\partial x}_{|x=\psi_{t_0}(t,x_0)} \right]\delta x$ 
characterizes the linearization of \eqref{eq:sys} along its trajectories.
}
Thus, exploiting the relation between $V$ and Finsler structure, 
the contraction of the structure along $\psi_{t_0}(t,x_0)$
(locally - in each tangent space) 
guarantees, via integration, that
the distance between any pair of solutions 
$\psi_{t_0}(t,x_1)$ and $\psi_{t_0}(t,x_2)$, $x_1,x_2\in \mathcal{C}$,
shrinks to zero as $t$ goes to infinity.
A graphical illustration is provided in Fig. \ref{fig:main}.
 \begin{figure}[htbp]
 \begin{center}
 \includegraphics[width=0.98\columnwidth]{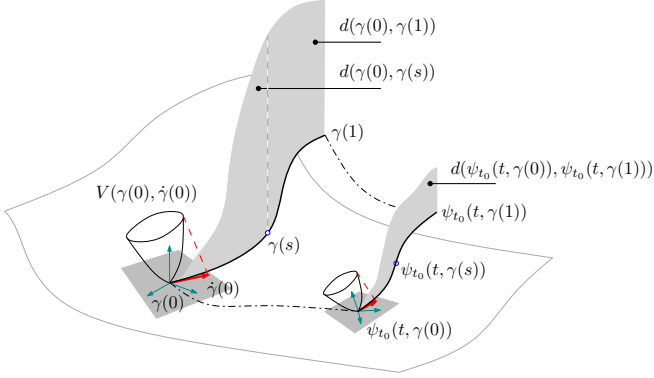} 
 \caption{A graphical illustration of the contraction of the distance 
 induced by Condition \eqref{eq:diff_contraction} on the solutions to \eqref{eq:sys}. The Finsler-Lyapunov function assigns
a positive value to each pair $(\gamma(s),\dot{\gamma}(s))$. The length of the curve $\gamma$ is given by the 
 integral of the Finsler-Lyapunov function along $\gamma$, represented by the shaded area.}
 \label{fig:main}
 \end{center}
 \end{figure} 

The incremental Lyapunov approach 
proposed in \cite{Angeli00}, 
establishes incremental stability by 
checking a pointwise geometric condition in the
product space $\mathcal{M}\times\mathcal{M}$.
In contrast, the differential approach 
proposed here establishes incremental stability
by checking a pointwise geometric condition in the
tangent bundle $T\mathcal{M}$.
Several earlier works have adopted this approach in a Riemannian framework, 
focusing on \emph{quadratic}
functions $V(x,\delta x) =  \delta x^T P(x) \delta x$ in Euclidean spaces 
(see Section VI).
There are a number of reasons to consider Finsler 
generalizations of Riemannian structures for contraction analysis, 
some of which are illustrated  in the next section, where
we report a detailed comparison between the conditions proposed in Theorem 
\ref{thm:inf_contraction} and several results available in literature.

Before entering into the details of the proof, we present a scalar example
that illustrates the value of non-{constant} Riemannian structures in
nonlinear spaces.
\begin{example}
\label{ex:oscillator}
For $\mathcal{M} = \mathbb{S}^1$ consider the dynamics 
\begin{equation}
\label{eq:ex_a1}
\dot{\vartheta} = f(\vartheta) := -\sin(\vartheta).
\end{equation}
The tangent space at every point $\vartheta\in \mathcal{M}$ is given by $\real$.
The naive choice $V_1(\vartheta, \delta \vartheta) := \frac{1}{2}\delta \vartheta^2$
corresponds to a { constant} Riemannian structure on $\mathcal{S}^1$. 
Then, for any given compact set  $\mathcal{C} \subset (-\frac{\pi}{2},\frac{\pi}{2})$ \eqref{eq:diff_contraction} yields
\begin{equation}
\frac{\partial V_1(\vartheta, \delta \vartheta)}{\partial  \delta \vartheta}  \left(\frac{\partial f(\vartheta)}{\partial \vartheta} \right)\delta \vartheta
= - \cos(\vartheta) \delta \vartheta^2 < -\varepsilon V_1(\delta \vartheta) 
\end{equation}
where $\varepsilon > 0$ (sufficiently small).
From Theorem \ref{thm:inf_contraction} we conclude that \eqref{eq:ex_a1}
is incrementally exponentially stable on compact sets $\mathcal{C} \subset (-\frac{\pi}{2},\frac{\pi}{2})$
such that $0\in \mathcal{C}$ (to guarantee that $\mathcal{C}$ is forward invariant).
For $\mathcal{C} = [-\frac{\pi}{2},\frac{\pi}{2}]$ we have only incremental stability, 
since $\cos(\vartheta)=0$ at $|\vartheta|=\frac{\pi}{2}$. From Definition \ref{def:distance}, note that
the distance induced by {$F = \sqrt{2V_1}$} is given by $|\vartheta_1-\vartheta_2|$.

A maximal contracting region is captured with the choice $V_2:(\mathbb{S}^1\setminus\{\pi\})\times \real \to \real_{\geq 0}$
given by $V_2 = \frac{\delta \vartheta^2}{1+\cos{\vartheta}} $. Despite the identification of each
$T_\vartheta\mathcal{M}$ with $\real$, the measure of the ``length'' of $\delta \vartheta$ given by $V_2$ 
now depends { on $\vartheta$}. 
Note that $V_2$ satisfies each condition of Definition \ref{def:LyapFins_function}
and is well defined in $\mathbb{S}^1\setminus\{\pi\}$ since $\frac{1}{1+\cos(\vartheta)}\to \infty$ 
as $|\vartheta|\to \pi$. For any given compact set $\mathcal{C} \subset (\mathbb{S}^1\setminus\{\pi\})$ 
such that $0\in \mathcal{C}$, 
\eqref{eq:diff_contraction} yields
\begin{equation}
\begin{array}{l}
\frac{\partial V_2(\vartheta,\delta \vartheta)}{\partial  \vartheta} f(\vartheta) 
+
\frac{\partial V_2(\vartheta,\delta \vartheta)}{\partial  \delta \vartheta}  \left(\frac{\partial f(\vartheta)}{\partial \vartheta} \right)\delta \vartheta = \vspace{1mm}\\
\qquad = -\frac{\sin(\vartheta)^2}{(1+\cos(\vartheta))^2}\delta \vartheta^2
-2 \frac{\cos(\theta)}{1+\cos(\theta)}\delta\vartheta^2 \vspace{2mm}\\
\qquad = 
-\frac{\sin(\vartheta)^2 +2\cos(\vartheta)(1+ \cos(\vartheta))}{(1+\cos(\vartheta))^2}
\delta\vartheta^2 \vspace{2mm}\\
\qquad = 
-\frac{1 +  2\cos(\vartheta) + \cos(\vartheta)^2}{(1+\cos(\vartheta))^2}
\delta\vartheta^2 \vspace{2mm}\\
\qquad =  - \delta\vartheta^2 \vspace{2mm}\\
\qquad \leq 
-\varepsilon V_2(\vartheta,\delta \vartheta), 
\end{array}
\end{equation}
where $\varepsilon > 0$. Thus, by Theorem \ref{thm:inf_contraction}, 
\eqref{eq:ex_a1} is incrementally exponentially stable on $\mathcal{C}$.
\end{example}

\begin{proofof}\emph{Theorem \ref{thm:inf_contraction}.}
 \label{proof:inf_contraction}
The proof is divided in four main steps. 
For simplicity, we develop the calculations in coordinates.

\emph{(i) Setup: Finsler structure and parameterized solution}.

For any two points $x_1,x_2\in \mathcal{M}$, 
let $\Gamma(x_1,x_2)$ be the collection of
piecewise $C^1$, equally oriented curves $\gamma:I \to \mathcal{C}\subset\mathcal{M}$,
$I:=\{s \in \real \,|\,0 \leq s \leq 1\}$,  
connecting $x_1$ to $x_2$, that is, $\gamma(0) = x_1$ and 
$\gamma(1) = x_2$. { In coordinates, the distance $d$ {induced by $F$} in
Definition \ref{def:distance} reads
\begin{equation}
 \label{eq:induced_distance}
 d(x_1,x_2) = \inf_{\Gamma(x_1,x_2)} \int_I F\left(\gamma(s),\frac{\partial \gamma(s)}{\partial s}\right) ds.
\end{equation}
where $F$ is the associated Finsler structure to $V$ of 
Definition \ref{def:LyapFins_function}.}
For any two initial conditions $x_1,x_2\in \mathcal{C}$ and any given $\varepsilon>0$, 
consider now a regular smooth curve 
$\overline{\gamma}:I\to \mathcal{C}\subset\mathcal{M}$ such that $\overline{\gamma}(0)=x_1$, $\overline{\gamma}(1)=x_2$, and 
\footnote{By using a generic curve smooth $\overline{\gamma}$ which satisfies \eqref{eq:M01} we do not need to assume the existence of geodesics and we simplify the exposition by avoiding the analysis of 
points of non-differentiability.}
{
\begin{equation}
 \label{eq:M01}
  \int_I F\left(\overline{\gamma}(s),\frac{\partial \overline{\gamma}(s)}{\partial s}\right) ds \leq (1+\varepsilon)d(x_1,x_2). 
\end{equation}
}
Let $\psi_{t_0}(\cdot,\overline{\gamma}(s))$ be the solution to \eqref{eq:sys} from the initial condition 
$\overline{\gamma}(s)$, for $s\in I$, at time $t_0$. Precisely, 
$\psi_{t_0}(\cdot,\overline{\gamma}(\cdot))$ is a function from $\real\times I$ to $\mathcal{M}$
{
that satisfies, in coordinates,
\begin{equation}
 \label{eq:M02}
 \frac{\partial }{\partial t} \psi_{t_0}(t,\overline{\gamma}(s)) = f(t,\psi_{t_0}(t,\overline{\gamma}(s))) \quad \forall t\geq t_0\,,\ \forall\in I.
\end{equation}
}
Clearly $\psi_{t_0}(t_0,\overline{\gamma}(\cdot))=\overline{\gamma}(\cdot)$ thus, 
from \eqref{eq:M01}, we have that
{
\begin{equation}
 \label{eq:M03}
  \int_I F\left( \psi_{t_0}(t_0,\overline{\gamma}(s)), 
  \frac{\partial }{\partial s}\psi_{t_0}(t_0,\overline{\gamma}(s)) \right) ds 
  \leq (1+\varepsilon)d(x_1,x_2).
\end{equation}
}
As usual, for each  $t\geq t_0$ and $s\in[0,1]$ the differential of $\psi$ in the direction 
$\frac{\partial}{\partial t}$ characterizes the \emph{time derivative} of the 
parameterized solution $\psi_{t_0}(\cdot,\overline{\gamma}(s))$. 
Instead, the differential of $\psi$ in the direction $\frac{\partial}{\partial s}$
 characterizes at each $s$ 
the \emph{tangent vector} to the curve $\psi_{t_0}(t,\overline{\gamma}(\cdot))$, for fixed  time $t$. 
Following \cite{Lohmiller1998}, we call this tangent vector \emph{virtual displacement}. 
Thus, combining integration of the displacement
along $\frac{\partial}{\partial s}$, time derivative along $\frac{\partial}{\partial t}$, and \eqref{eq:diff_contraction}, we can establish 
contraction of the distance \eqref{eq:induced_distance} along the solutions to \eqref{eq:sys}. 

\emph{(ii)~The displacement dynamics along the solution $\psi_{t_0}(\cdot,\overline{\gamma}(s))$}.

Consider the function $\delta \psi_{t_0}(\cdot,\cdot):\real\times I \to T\mathcal{M}$ given by 
the tangent vector 
{
$\delta \psi_{t_0}(t,s) := D\psi_{t_0}(t,\overline{\gamma}(s))[0,1]$, which
in coordinates is given by $\frac{\partial}{\partial s} \psi_{t_0}(t,\overline{\gamma}(s))$}
for each $t\geq t_0$ and $s\in I$.
{
Its time derivative is given by
\begin{subequations}
\label{eq:M04}
\begin{eqnarray}
 \frac{\partial}{\partial t}\delta \psi_{t_0}(t,\overline{\gamma}(s))
 &=& \frac{\partial^2}{\partial t \partial s}\psi_{t_0}(t,\overline{\gamma}(s)) \label{eq:M04a}\\
 &=& \frac{\partial^2}{\partial s \partial t}\psi_{t_0}(t,\overline{\gamma}(s)) \label{eq:M04b}\\
 &=& \frac{\partial}{\partial s} f(t,\psi_{t_0}(t,\overline{\gamma}(s))) \label{eq:M04c}\\
 &=& \left[ \frac{\partial f(t,x)}{\partial x}\right] 
 \frac{\partial }{\partial s} \psi_{t_0}(t,\overline{\gamma}(s)) \label{eq:M04d}\\
 &=& \left[ \frac{\partial f(t,x)}{\partial x}\right]\delta \psi_{t_0} (t,s). \label{eq:M04e}
\end{eqnarray}
\end{subequations}
where $\frac{\partial f(t,x)}{\partial x}$ must be
evaluated at $x=\psi_{t_0}(t,\overline{\gamma}(s))$.
\eqref{eq:M04a} follows from the definition of $\delta \psi_{t_0}(t,s)$.
\eqref{eq:M04b} follows from the fact that $\psi_{t_0}(\cdot,\overline{\gamma}(\cdot))$ 
is a $C^2$ function, since $f$ is a $C^2$ vector field and $\overline{\gamma}(\cdot)$
is a smooth curve \cite[Theorem 4.1]{boothby2003}.
\eqref{eq:M04d} follows from the chain rule.
Finally, \eqref{eq:M04e} follows from the definition of $\delta \psi_{t_0}(t,s)$.
}

\vspace{1mm}
\emph{(iii)~The dynamics of $V$ along the solution $\psi_{t_0}(\cdot,\overline{\gamma}(s))$}.

Consider the function $\overline{V}: \real\times I \to \real_{\geq 0}$ given by 
$\overline{V}(t,s) = V(\psi_{t_0}(t,\overline{\gamma}(s)), \delta\psi_{t_0}(t,s))$
for each $t\geq t_0$ and $s\in I$. Note that $\overline{V}$ has a well-defined 
time derivative $\frac{d}{dt} \overline{V}(t,s)$ since 
$\overline{V}(t,s)\in\real_{\geq 0}$ for each $t$ and $s$. 
{
In coordinates, for $x=\psi_{t_0}(t,\overline{\gamma}(s))$ and  $\delta x = \delta \psi_{t_0}(t,s)$,
\begin{subequations}
\label{eq:M05}
\begin{eqnarray}
\hspace{-6mm}\frac{d}{dt} \overline{V}(t,s) 
&=& 
\left[\frac{\partial V(x,\delta x)}{\partial x}\right] \frac{\partial}{\partial t} \psi_{t_0}(t,\overline{\gamma}(s)) + \nonumber \\
& & +  
\left[\frac{\partial V(x,\delta x) }{\partial \delta x}\right]
\frac{\partial}{\partial t} \delta \psi_{t_0}(t,s) \label{eq:M05a}\\
&=& 
\left[\frac{\partial V(x,\delta x)}{\partial x}\right] f(t,\psi_{t_0}(t,\overline{\gamma}(s)))  + \nonumber \\
& & +  
\left[\frac{\partial V(x,\delta x) }{\partial \delta x}\right]
\left[ \frac{\partial f(t,x)}{\partial x}\right]\delta \psi_{t_0} (t,s) \label{eq:M05b}\\
&\leq&
-\alpha(\overline{V}(t,s)). \label{eq:M05c} 
\end{eqnarray}
\end{subequations}
 \eqref{eq:M05a} follows from the application of the chain rule. 
 \eqref{eq:M05b} follows from \eqref{eq:M02} and \eqref{eq:M04}.
\eqref{eq:M05c} is enforced by \eqref{eq:diff_contraction}. 
}

\emph{(iv)~Incremental stability properties}.
{
Consider the Finsler structure $F$ associated to the Finsler-Lyapunov function $V$.
Define $\overline{F}: \real\times I \to \real_{\geq 0}$ as
$\overline{F}(t,s) = F(\psi_{t_0}(t,\overline{\gamma}(s)), \delta\psi_{t_0}(t,s))$.}

\textbf{(IS)}~Incremental stability: if $\alpha(s) = 0$ for each $s>0$ then 
\begin{equation}
\label{eq:MStab}
\overline{V}(t,s) \leq \overline{V}(t_0,s) \quad\mbox{for all } t\geq t_0 \mbox{ and } s\in I.
\end{equation}
Therefore, for each $t\geq t_0$, exploiting \eqref{eq:FinsLyap_bounds} and \eqref{eq:MStab}, we get
{
\begin{equation}
\label{eq:dMStab}
\begin{array}{rcl}
d( \psi_{t_0}(t,x_1) , \psi_{t_0}(t,x_2) ) 
&\leq &  
\int_I \overline{F}(t,s) ds \\
&\leq & c_1^{-\frac{1}{p}} \int_I \overline{V}(t,s)^{\frac{1}{p}} ds  \\
&\leq &  c_1^{-\frac{1}{p}} \int_I \overline{V}(t_0,s)^{\frac{1}{p}} ds  \\
& \leq &
(c_2/c_1)^{\frac{1}{p}} \int_I \overline{F}(t_0,s) ds \\
&\leq & (1+\varepsilon) (c_2/c_1)^{\frac{1}{p}} d(x_1,x_2)
\end{array}
\end{equation}
}
where the first inequality follows from the definition of induced distance in 
\eqref{eq:induced_distance_main}, and the last inequality follows from \eqref{eq:M03}.

\textbf{(IAS)}~Incremental asymptotic stability: if $\alpha$ is a 
$\mathcal{K}$ function then {$\frac{d}{dt} \overline{V}(t,s)\leq 0$, thus} (IS) holds,
moreover
by {\cite[Lemma 6.1]{Sontag1989} and \cite[Theorem 6.1]{Hale1980}}, 
there exists a $\mathcal{KL}$ function $\beta$ such that 
\begin{equation}
\label{eq:MAStab}
\overline{V}(t,s) \leq \beta(\overline{V}(t_0,s), t-t_0) \quad\mbox{for all } t\geq t_0 \mbox{ and } s\in I.
\end{equation}
Therefore, following the calculations in \eqref{eq:dMStab}, for each $t\geq t_0$,
\begin{equation}
\label{eq:dMAStab1}
\begin{array}{rcl}
d( \psi_{t_0}(t,x_1) , \psi_{t_0}(t,x_2) ) 
&\leq & {  c_1^{-\frac{1}{p}}} \int_I \overline{V}(t,s)^{\frac{1}{p}} ds \\
&\leq & {  c_1^{-\frac{1}{p}}} \int_I \beta(\overline{V}(t_0,s), t-t_0)^{\frac{1}{p}} ds
\end{array}
\end{equation}
from which we get
\begin{equation}
\label{eq:dMAStab2}
\begin{array}{l}
 \lim\nolimits\limits_{t\to\infty} d( \psi_{t_0}(t,x_1) , \psi_{t_0}(t,x_2) ) \\
\qquad \leq {  c_1^{-\frac{1}{p}}} \lim\nolimits\limits_{t\to\infty} \int_I 
\beta(\overline{V}(t_0,s), t-t_0)^{\frac{1}{p}} ds \\
\qquad = 0 .
\end{array}
\end{equation}
{ The last identity is a consequence of the Lebesgue's dominated convergence
theorem, since $\beta(\overline{V}(t_0,s), t-t_0)$ is a monotonically decreasing function
for $t \to \infty$.
}

\textbf{(IES)}~Incremental exponential stability:
if $\alpha(s) = \lambda s >0 $ for each $s>0$ then, 
by \cite[Theorem 6.1]{Hale1980}, we get
\begin{equation}
\label{eq:MEStab}
\overline{V}(t,s) \leq e^{ -\lambda (t-t_0)}\overline{V}(t_0,s) \quad\mbox{for all } t\geq t_0 \mbox{ and } s\in I.
\end{equation}
Therefore, mimicking \eqref{eq:dMStab}, for each $t\geq t_0$,
\begin{equation}
\label{eq:dMEStab}
\begin{array}{l}
d( \psi_{t_0}(t,x_1) , \psi_{t_0}(t,x_2) ) \\
\qquad \leq   
{
c_1^{-\frac{1}{p}} \int_I \overline{V}(t,s)^{\frac{1}{p}} ds} \\
\qquad \leq   
{
c_1^{-\frac{1}{p}} e^{-\frac{\lambda}{p}(t-t_0)}\int_I \overline{V}(t_0,s)^{\frac{1}{p}} ds 
} \\
\qquad \leq   
{ 
(c_2/c_1)^{\frac{1}{p}} e^{-\frac{\lambda}{p}(t-t_0)} \int_I \overline{F}(t_0,s) ds 
}
\\
\qquad \leq   
{ 
 (1+\varepsilon) (c_2/c_1)^{\frac{1}{p}} e^{-\frac{\lambda}{p}(t-t_0)} d(x_1,x_2).
}
\end{array} \vspace{-5mm}
\end{equation}
\end{proofof}

The proof of Theorem \ref{thm:inf_contraction} generalizes the argument
proposed in the proof of \cite[Lemma 1]{Sontag10yamamoto} and \cite[Theorem 5]{Russo10} 
to general manifolds and Finsler structures (the proof provided in \cite{Russo10}
is developed for Euclidean spaces using matrix measures). An equivalent proof to Theorem 
\ref{thm:inf_contraction} for incremental exponential stability and  $V$ restricted to 
Riemannian structures can be found in \cite[Appendix II]{Rouchon2003}. 
\begin{remark}
 {
 Consider the case $V(x,\delta x) = F(x,\delta x)^p$ in Definition \ref{def:LyapFins_function}.}
 Then, from \eqref{eq:M01} and \eqref{eq:dMStab}, for any given converging sequence 
 $\varepsilon_k\in\real_{> 0}$, $\lim\nolimits\limits_{k\to\infty} \varepsilon_k=0$, 
 we can construct a sequence of $C^2$ curves $\gamma_{k}:I_{k}\to\mathcal{M}$ such that 
 \begin{equation}
 \begin{array}{l}
 \lim\nolimits\limits_{k\to\infty} \int_{I_{k}}  V(\gamma_{k}(s),D\gamma_{k}(s)[1])^{\frac{1}{p}} ds \\
 \qquad \qquad \leq \lim_{k\to\infty}(1+\varepsilon_k)d(x_1,x_2) \\
 \qquad \qquad= d(x_1,x_2). 
 \end{array}
 \end{equation}
 In such a case, in the limit of $k\to\infty$, (IS) in Theorem 
 \ref{thm:inf_contraction} guarantees incremental stability with the stronger property that  
 \begin{equation}
   d( \psi_{t_0}(t,x_1) , \psi_{t_0}(t,x_2) ) \leq d(x_1,x_2) 
   \quad \forall t\geq t_0, \forall x_1,x_2\in\mathcal{M}. \vspace{-5mm}
 \end{equation}
\end{remark}

\begin{remark}
\label{rem:piecewise_diff}
The result of Theorem \ref{thm:inf_contraction} can be extended to
piecewise continuously differentiable and locally Lipschitz candidate Finsler-Lyapunov functions $V$.
In a similar way, the assumption that every two points of $\mathcal{C}$
are connected by a smooth curve $\gamma:I\to \mathcal{C}$ can be relaxed to
piecewise smooth curves.
The key observation is that the decrease of the distance between any two solutions 
is preserved also if \eqref{eq:M05} holds for \emph{almost every} $t$ and $s$. 
With this aim, for example, let $\mathcal{D}\subseteq {T\mathcal{M}}$ be the
set of nondifferentiable points of $V$.
\eqref{eq:M05} holds for \emph{almost every} $t$ and $s$ if
for any given solution $\psi_{t_0}$ such that 
$(\psi_{t_0}(t,x), D \psi_{t_0}(t,x)[0,\delta x]) \in \mathcal{D}$,
there exists  $\varepsilon>0$ which guarantees
$(\psi_{t_0}(\tau,x), D \psi_{t_0}(\tau,x)[0,\delta x]) \notin \mathcal{D}$ 
for every $\tau\in(t,t+\varepsilon]$.
The transversality of the trajectories with respect to $\mathcal{D}$ 
can be enforced geometrically by requiring that, (in coordinates)
for each $t\geq t_0$, and each 
$(x,\delta x)\in\mathcal{D}$,
the pair $(f(t,x), \frac{\partial}{\partial x}f(t,x)\delta x)$ does not belong to the 
tangent cone to $\mathcal{D}$ at $(x,\delta x)$.
\end{remark}

We conclude the section by emphasizing the analogy between 
classical Lyapunov theory and Theorem \ref{thm:inf_contraction}.
We also emphasize the geometric (or coordinate-free) nature of 
Theorem \ref{thm:inf_contraction}, showing that
\eqref{eq:diff_contraction} in Theorem 
\ref{thm:inf_contraction} is independent on the
selected coordinate chart. With this aim,
we introduce two charts $\varphi,\psi:\mathcal{U}\subseteq\mathcal{M} \to \real^d$,
and we denote by $z$ and $y$ the coordinate representations 
$z = \varphi(x)$ and $y=\psi(x)$ of any point $x\in \mathcal{M}$.
In particular, $V^{(z)}$ and $f^{(z)}(t,z)$ denote respectively
the Finsler-Lyapunov function $V$ and
the vector field \eqref{eq:sys} in the chart $\varphi$.
$V^{(y)}$ and $f^{(y)}(t,y)$ denote the same quantities in the local chart $\psi$.  

The analogy with classical Lyapunov theory is emphasized by considering the
aggregate state $Z:=(z,\delta z)$. 
Suppose that \eqref{eq:diff_contraction} has been established by using 
the coordinate chart $\varphi$. Exploiting the notion
of aggregate state, we define
$\dot{Z} = f^{(Z)}(Z)$, where
$f^{(Z)}(Z) := \smallmat{f^{(z)}(z) \\  
\frac{\partial f^{(z)}(z)}{\partial z} \delta z}$,
and $V^{(Z)}(Z) := V^{(z)}(z,\delta z)$,
from which \eqref{eq:diff_contraction} reads
$
 \frac{\partial V^{(Z)}(Z)}{\partial Z} f^{(Z)}(Z) \leq -\alpha(V^{(Z)}(Z)).
$
This formulation reveals that the Finsler-Lyapunov approach is 
Lyapunov's second method on the variational system.
Clearly, a Finsler-Lyapunov function differs from classical Lyapunov functions,
since its definition is tailored to endow $\mathcal{M}$ with the structure of a metric space.

Coordinate independence can be shown as follows.
Define $Y:=(y,\delta y)$ and note that
$Z = H(Y)$, where $H(y,\delta y) := (\varphi(\psi^{-1}(y)), 
\frac{\partial \varphi(\psi^{-1}(y))}{\partial y} \delta y)$.
Necessarily, the vector field in the $Y$ coordinates reads
$f^{(Y)}(Y) = 
\left[\frac{\partial H^{-1}(Z)}{\partial Z}_{|Z = H(Y)} \right]
f^{(Z)}(H(Y))$, and $V^{(Y)}(Y) = V^{(Z)}(H(Y))$. Thus,
\begin{equation}
\label{eq:Lyap_analogy2}
\begin{array}{rcl}
 \frac{\partial V^{(Y)}(Y)}{\partial Y} f^{(Y)}(Y) 
 &=& \left[\frac{\partial V^{(Z)}(Z)}{\partial Z}_{|Z=H(Y)}\right] \cdot \vspace{1mm}\\
 & & \cdot \underbrace{\frac{\partial H(Y)}{\partial Y} 
 \left[\frac{\partial H^{-1}(Z)}{\partial Z}_{|Z = H(Y)} \right]}_{=I} \cdot \\
 & &
 \cdot  f^{(Z)}(H(Y)) \vspace{1mm}\\
 &=& \left[\frac{\partial V^{(Z)}(Z)}{\partial Z}_{|Z=H(Y)}\right] f^{(Z)}(H(Y)) \vspace{1mm}\\
 &\leq& -\alpha(V^{(Z)}(H(Y))) \vspace{1mm}\\
 &=& -\alpha(V^{(Y)}(Y)),\\
 \end{array}
\end{equation}
which proves the coordinate independence of \eqref{eq:diff_contraction}.

\section{Revisiting some literature on contraction}
\label{sec:literature_comparison}

\subsection{Riemannian contraction, matrix measure contraction, and incremental stability}
\label{sec:comparison_sontag-slotine}
For a historical perspective on contraction the reader is referred to
\cite{Jouffroy05}, and related concepts in \cite{Pavlov04} and \cite{Sontag10yamamoto}.
We propose here a detailed comparison with selected references from the literature.
First, we consider results on contraction based on matrix measures
\cite{Russo10,Sontag10yamamoto} and matrix inequalities \cite{Pavlov05}.
We recast these results within the differential framework proposed in 
Theorem \ref{thm:inf_contraction}, by
suitable definitions of state-independent Finsler-Lyapunov functions $V(x,\delta x)$.
Then, we consider results based on Riemannian structures \cite{Lohmiller1998,Rouchon2003},
and we show that they coincide with the (IES) condition of Theorem \ref{thm:inf_contraction}
for a function $V(x,\delta x)$ defined by the Riemannian structure. 

The reader will notice that these two groups of results are essentially disjoint. 
The equivalence between the conditions based 
on matrix measures and the conditions based on Riemannian structures 
can be established only 
for quadratic vector norms $|x|_P = \sqrt{x^T P x}$ or, 
equivalently, for state-independent Riemannian structures 
$\langle \delta x, \delta x \rangle =  \delta x^T P \delta x$.
However, both groups of results fall within the proposed
differential Finsler-Lyapunov framework.
{
We emphasize that the early work of Lewis \cite{Lewis1949} already
exploits Finsler structures for the characterization of
incremental properties of solutions, also providing
early results on the relation between contraction and 
the existence of periodic solutions.}

The approach proposed in \cite{Russo10} and \cite{Sontag10yamamoto} is
based on the matrix measure of the Jacobian $J(t,x):=\frac{\partial f(t,x)}{ \partial x}$.
For instance, given a vector norm $|\cdot|$ in $\realn$ and its induced matrix norm, 
the \emph{induced matrix measure} $\mu$ of a matrix $A\in\real^{n\times n}$
is given by  $\mu(A):= \lim\nolimits\limits_{h\to 0^+} \dfrac{|I+hA|-1}{h}$, \cite[Section 3.2]{Vidyasagar}. 
Then, following
\cite[Definition 1 and Theorem 1]{Sontag10yamamoto}, let $\mathcal{C}$ be a convex set,
forward invariant for the system $\dot{x}=f(t,x)$. $f$ is a $C^1$ function.
If 
\begin{equation}
\label{eq:sontag01}
\mu( J(t,x) ) \leq -c < 0 \qquad \mbox{ for each } x\in\mathcal{C} \mbox{  and each } t\geq 0,
\end{equation}
then the system is incrementally exponentially stable with a distance given by 
$d(x_1,x_2) = |x_1-x_2|$. Moreover, by \cite[Lemma 4]{Sontag10yamamoto}, the same
result hold for non convex sets $\mathcal{C}$ that satisfy a mild regularity assumption, 
and it guarantees incremental exponential stability with a distance function 
$d(x_1,x_2) \leq K|x_1-x_2|$ for some $K>1$.

Condition \eqref{eq:sontag01}
{guarantees that} \eqref{eq:diff_contraction} {holds}
for the Finsler-Lyapunov function given by
$V(x,\delta x) = |\delta x|$ and $\alpha(s) = cs$. This follows from
\begin{equation}
\label{eq:sontag02}
\begin{array}{l}
\dfrac{\partial V(x,\delta x)}{ \partial  \delta x} J(t,x) \delta x  \ = \\
\qquad = \lim\nolimits\limits_{h\to 0^+} \dfrac{V(x,\delta x + hJ(t,x)\delta x) - V(x,\delta x)}{h} \\
\qquad \leq \lim\nolimits\limits_{h\to 0^+} \dfrac{|I+hJ(t,x)||\delta x| - |\delta x|}{h} \\
\qquad = \lim\nolimits\limits_{h\to 0^+} \dfrac{|I+hJ(t,x)| - 1}{h} V(x,\delta x)\\
\qquad = \mu(J(t,x))V(x,\delta x)\\
\qquad = -c V(x,\delta x) \qquad \mbox{ for each } t\geq 0,\,x\in\mathcal{C},\,\delta x\in\realn.  
\end{array}
\end{equation}

The approach proposed in \cite{Pavlov05} 
(and in \cite[Chapter 5, Section 5]{willems1970stability} for time-invariant systems)
use matrix inequalities based on 
the Jacobian $J(t,x)$ and on two positive definite and symmetric matrices $P$ and $Q$. 
{ These results are a particular case of the approach based on matrix
measures, for suitable selections of the norm $|\cdot|_2$. It is instructive
to show the equivalence between} \cite[Theorem 1]{Pavlov05} 
and incremental exponential stability of Theorem \ref{thm:inf_contraction} 
for $V$ restricted to the {constant} Riemannian structure $\delta x^T P \delta x$. 
Consider the system 
$\dot{x} = f(x,w(t))$ where $f$ is a $\mathcal{C}^1$ function and 
$w:\real_{\geq 0}\to \mathcal{W}\subset \real^m$ is a $C^1$ exogenous signal. 
Thus, $f(x,w(t))$ is a time-varying $C^1$ function. 
Applying Theorem \ref{thm:inf_contraction} to $V(x,\delta x) = \delta x^T P \delta x$, 
incremental exponential stability holds if
\begin{equation}
\label{eq:Pavlov01}
\begin{array}{rcl}
 \frac{\partial V(x,\delta x)}{ \partial  \delta x} \frac{\partial f(x,w)}{ \partial x} \delta x
 &=& \delta x^T \left(P\frac{\partial f(x,w)}{ \partial x} + \frac{\partial f(x,w)}{ \partial x}^T P \right)\delta x\\
 & \leq & -\lambda V(x,\delta x) = - \lambda \delta x^T P \delta x  
 \end{array}
\end{equation}
for some $\lambda > 0$ and for every $\delta x\in\realn$ and $w\in \mathcal{W}$. 
The right-hand side of \eqref{eq:Pavlov01} can be replaced by $-\delta x^T Q \delta x$, for some matrix $Q=Q^T>0$
(for any given $Q$, we can always find $\lambda$ sufficiently small to guarantee
$Q > \lambda P$, and vice versa). Therefore, the condition in \eqref{eq:Pavlov01} is equivalent to
the existence of  positive definite and symmetric matrices $P$ and $Q$ such that 
\begin{equation}
\label{eq:Pavlov02}
P \frac{\partial f(x,w)}{ \partial x} + \frac{\partial f(x,w)}{ \partial x}^T P \leq -Q
\end{equation}
which is \cite[Eq. (8), Theorem 1]{Pavlov05}. 
The induced distance {given by $F=\sqrt{V}$} is the quadratic form  
$d(x_1,x_2) \!=\! \sqrt{(x_1\!-\!x_2)^T\!P(x_1\!-\!x_2)}$.
See also \cite{Pavlov05book} and Section \ref{sec:partial_contraction} in the present paper.

Conditions for contraction based on quadratic structures  
$\delta x^T M(x) \delta x$ are provided in the contraction 
paper \cite{Lohmiller1998}
(we consider the time-invariant case only). 
\cite[Definition 2 and Theorem 2]{Lohmiller1998}
establish incremental exponential stability for $\dot{x}= f(t,x)$ 
by requiring, using the notation of \cite{Lohmiller1998}, that the inequality
\begin{equation} 
\label{eq:slot01}
\begin{array}{l}
 \delta x^T \left( J(t,x)^T M(x) + M(x)J(t,x) + \dot{M}(x) \right) \delta x \\
 \hspace{4.5cm} \leq -\lambda \delta x^T M(x) \delta x
 \end{array}
\end{equation}
is satisfied for every $x$ and $\delta x$, for some $\lambda >0$. 
Note that $\delta x^T \dot{M}(x) \delta x$ is a short notation for
$\frac{\partial }{\partial x}(\delta x^T M(x) \delta x) f(x)$.
Therefore, taking $V(x,\delta x) = \delta x^T M(x)\delta x$, 
the relation between \eqref{eq:slot01} and \eqref{eq:diff_contraction} for 
incremental exponential stability
is immediate. The same argument illustrates 
the relation between the differential approach proposed here 
and the results in \cite[Appendix II]{Rouchon2003} and \cite[Definition 2.4 and Theorem 2.5]{Zamani11}
(for this last paper, the differential equation $\dot{x} = f(x,u)$, where $u$ is an input signal, 
is casted to the form \eqref{eq:sys} by considering the time-varying vector field $\overline{f}(t,x) := f(x,u(t))$).

We conclude the section by considering the incremental Lyapunov approach in 
\cite{Angeli00,Ruffer11}. The key observation is given by 
\cite[Lemma 2.3 and Remark 2.4]{Angeli00} and \cite[Appendix A.1]{Ruffer11} which 
shows the equivalence between the incremental stability of $\dot{x} =  f(t,x)$, $x\in \realn$,
and the stability of the set $\mathcal{A} := \{(x_1,x_2)\in \real^{2n}\,|\, x_1=x_2\}$ 
for the extended system $\dot{x}_1 =  f(t,x_1)$,
$\dot{x}_2 =  f(t,x_2)$. Thus, to show asymptotic stability of the set $\mathcal{A}$,
a Lyapunov function $V(x_1,x_2)$ must be positive everywhere but on $\mathcal{A}$,
that is 
\begin{equation}
\label{eq:angeli01}
\underline{\alpha}(|x_1-x_2|) \leq V(x_1,x_2) \leq \overline{\alpha}(|x_1-x_2|),
\end{equation}
for some $\underline{\alpha},\overline{\alpha}\in \mathcal{K}$; and the
derivative of $V(x_1,x_2)$ along the solutions of the system must decrease for $x_1,x_2\notin \mathcal{A}$, 
which is established by enforcing 
\begin{equation}
\label{eq:angeli02}
\frac{\partial V(x_1,x_2)}{\partial x_1}f(t,x_1) + 
\frac{\partial V(x_1,x_2)}{\partial x_2}f(t,x_2)
\leq -\alpha(|x_1-x_2|)
\end{equation}
for each pair $x_1,x_2\in\realn$, where $\alpha\in\mathcal{K}$. 
Indeed, an incremental Lyapunov function is essentially a 
Lyapunov function for the extended system which measures directly 
the distance between any two points $x_1$ and $x_2$. 

The differential 
framework proposed here does not use a Lyapunov function 
to study directly the time evolution of the distance between any two solutions. 
Instead, a lifted Lyapunov function on the tangent bundle  is used to characterize the contraction of the 
infinitesimal neighborhood of each point $x$ - a local property - to infer indirectly the contraction
of the distance - a global property -  via integration. Applications suggest that it can be considerably
more difficult to construct a distance than the associated differential structure.

\subsection{Contractive systems forget initial conditions}
\label{sec:partial_contraction}
Under standard completeness assumptions on the distance,
all the (bounded) solutions of a contractive system converge to a unique steady-state solution.
This feature is exploited in control design \cite{Wang05, Pavlov05, Pavlov05book,Jouffroy10},
for example in tracking, by inducing an attractive desired steady-state solution via the feedforward action 
of exogenous signals (that preserve the contraction property),  
or in observer design, by a suitable injection of the measured output.
In what follows we revisit these results, showing that a particular 
application of Theorem \ref{thm:inf_contraction} entails the sufficient conditions
for convergent systems in \cite{Pavlov05, Pavlov05book}, and we 
formulate a proposition whose conditions parallels
the relaxed contraction analysis proposed 
by \cite{Wang05,Jouffroy10}, through the notion of virtual system.

Following \cite{Pavlov05} and \cite{Pavlov05book}, consider the system $\dot{x} = f(x,w(t))$
where $w$ is an exogenous signal. Define $\hat{f}(t,x) := f(x,w(t))$, 
assume that the solutions are bounded, and 
suppose that Theorem \ref{thm:inf_contraction}
holds for $\dot{x} = \hat{f}(t,x)$. Then, by incremental asymptotic stability, 
the solutions of the system converge towards each other, thus every 
solution converges to a steady
state solution $\dot{x}^*(t) = f(x^*(t),w(t))$ induced by $w$. 
This results parallels \cite[Property 3]{Pavlov05}. In particular, 
Theorem \ref{thm:inf_contraction} applied to $\dot{x}=\hat{f}(t,x) = f(x,w(t))$ 
recovers \cite[Property 3]{Pavlov05} when $V=\delta z^T P \delta z$ ({constant} metric) and $\alpha(s) = -ks$, $k>0$. 

Following \cite{Wang05} and \cite{Jouffroy10}, consider the system \eqref{eq:sys} given by $\dot{x} = f(t,x)$ and a new system of equations 
 \begin{equation}
 \label{eq:sys_virtual}
  \dot{z} = \hat{f}(t,z,x) \quad  \mbox{such that}\quad \hat{f}(t,x,x) =f(t,x),\,\hat{f}\in C^1.
 \end{equation}
  \eqref{eq:sys_virtual} is the so-called \emph{virtual system}, \cite{Wang05}.  
 \eqref{eq:sys_virtual} arises naturally in tracking and state estimation problems where, possibly,
 \eqref{eq:sys} is the reference system and the controlled/observer system is given by \eqref{eq:sys_virtual}. 
 For example, $\hat{f}(t,z,x) = {f}(t,z) + K(z-x)$ may represent a tracking controlled system 
 with state-feedback $K(z-x)$, while $\hat{f}(t,z,x) = {f}(t,z) + L(y_z-y_x)$ may represent
 an observer dynamics with output injection $L(y_z-y_x)$.
 Inspired by \cite{Wang05} and \cite{Jouffroy10},
 we provide the following proposition, a straightforward application of Theorem \ref{thm:inf_contraction}. 

\begin{proposition}
\label{prop:partial_contraction}
  Consider the system \eqref{eq:sys} on a smooth manifold $\mathcal{M}$ with $f$ of class $C^2$, and 
  a connected and forward invariant set $\mathcal{C}_x\subseteq\mathcal{M}$ for \eqref{eq:sys}.
  Consider \eqref{eq:sys_virtual} and suppose that the set $\mathcal{C}_z\subseteq\mathcal{M}$ is connected and 
  forward invariant for \eqref{eq:sys_virtual}. Given a $\mathcal{K}$ function ${\alpha}$,
 let $V$ be a candidate Finsler-Lyapunov function for \eqref{eq:sys_virtual} (Definition \ref{def:LyapFins_function}) 
 such that, in coordinates,  
\begin{equation}
\label{eq:partial_contraction}
  \frac{\partial V(z,\delta z)}{\partial z} \hat{f}(t,z,x) + 
  \frac{\partial V(z,\delta z) }{\partial \delta z} \frac{\hat{f}(t,z,x)}{\partial z}\delta z 
  \leq -{\alpha}(V(z,\delta z))
\end{equation}
for each {$t\in\real$}, each $x\in\mathcal{C}_x$ (uniformly in $x$), 
each $z\in \mathcal{C}_z\subseteq \mathcal{M}$, 
and each $\delta z \in T_z \mathcal{M}$. Then,
for any given initial condition $x_0\in \mathcal{C}_x$, and any initial condition $z_0\in\mathcal{C}_z$, 
 each solution $\varphi_{t_0}^{z}(t,z_0)$ to \eqref{eq:sys_virtual} 
 converges asymptotically to the solution $\varphi_{t_0}^{x}(t,x_0)$ to \eqref{eq:sys}.
\end{proposition}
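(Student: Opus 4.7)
The plan is to reduce the statement to a direct application of Theorem~\ref{thm:inf_contraction} by ``freezing'' the parameter $x$ in the virtual system to be a specific trajectory of~\eqref{eq:sys}, turning \eqref{eq:sys_virtual} into an ordinary time-varying system on $\mathcal{M}$ to which the Finsler--Lyapunov theorem applies directly.

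First, fix $x_0\in\mathcal{C}_x$ and $z_0\in\mathcal{C}_z$, and let $\varphi^x_{t_0}(\cdot,x_0)$ be the solution of~\eqref{eq:sys}, which remains in $\mathcal{C}_x$ by forward invariance. Define the time-varying vector field
\begin{equation}
\tilde{f}(t,z) := \hat{f}(t,z,\varphi^x_{t_0}(t,x_0))
\end{equation}
and consider the auxiliary system $\dot{z}=\tilde{f}(t,z)$. Two elementary observations drive the argument: on the one hand, the virtual trajectory $\varphi^z_{t_0}(\cdot,z_0)$ is by construction a solution of this auxiliary system; on the other hand, the defining identity $\hat{f}(t,x,x)=f(t,x)$ implies
\begin{equation}
\tilde{f}(t,\varphi^x_{t_0}(t,x_0)) = f(t,\varphi^x_{t_0}(t,x_0)),
\end{equation}
so that $\varphi^x_{t_0}(\cdot,x_0)$ is also a solution of $\dot{z}=\tilde{f}(t,z)$, with initial condition $x_0$.

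Second, I would check that $V$ is a candidate Finsler--Lyapunov function for the auxiliary system and that the contraction inequality \eqref{eq:diff_contraction} is inherited from \eqref{eq:partial_contraction}. This is precisely where the \emph{uniformity in $x\in\mathcal{C}_x$} of hypothesis~\eqref{eq:partial_contraction} is used: substituting $x=\varphi^x_{t_0}(t,x_0)\in\mathcal{C}_x$ into \eqref{eq:partial_contraction} yields
\begin{equation}
\frac{\partial V(z,\delta z)}{\partial z}\tilde{f}(t,z) + \frac{\partial V(z,\delta z)}{\partial \delta z}\frac{\partial \tilde{f}(t,z)}{\partial z}\delta z \leq -\alpha(V(z,\delta z))
\end{equation}
for all $t$, $z\in\mathcal{C}_z$ and $\delta z\in T_z\mathcal{M}$, with the same class-$\mathcal{K}$ function $\alpha$. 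Theorem~\ref{thm:inf_contraction}(IAS) then applies to $\dot{z}=\tilde{f}(t,z)$ on the forward invariant connected set containing both solutions, giving
\begin{equation}
\lim_{t\to\infty} d(\varphi^z_{t_0}(t,z_0),\varphi^x_{t_0}(t,x_0)) = 0,
\end{equation}
which is the claim.

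The main obstacle, and the point that deserves care, is ensuring that both $\varphi^z_{t_0}(\cdot,z_0)$ and $\varphi^x_{t_0}(\cdot,x_0)$ remain in a common connected set on which the contraction estimate for $\tilde{f}$ holds, so that Theorem~\ref{thm:inf_contraction} can be invoked verbatim. For $\varphi^z_{t_0}$ this is built into the forward invariance of $\mathcal{C}_z$ for~\eqref{eq:sys_virtual}; for $\varphi^x_{t_0}$ it follows, once $x_0\in\mathcal{C}_z$ (or $\mathcal{C}_x\subseteq\mathcal{C}_z$), from the fact that it also satisfies $\dot{z}=\tilde{f}(t,z)$ and from the forward invariance of $\mathcal{C}_z$. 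Once this compatibility is arranged, every other step is a clean transcription of the hypotheses into the setting where Theorem~\ref{thm:inf_contraction} has already done the work.
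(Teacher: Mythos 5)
Your proposal is correct and follows essentially the same route as the paper: freeze $x$ along the trajectory $\varphi^x_{t_0}(\cdot,x_0)$ to obtain the time-varying system $\dot z=\tilde f(t,z)$, invoke Theorem~\ref{thm:inf_contraction}(IAS), and use $\hat f(t,x,x)=f(t,x)$ to recognize $\varphi^x_{t_0}(\cdot,x_0)$ as one of the contracting solutions. Your closing remark about needing $x_0\in\mathcal{C}_z$ (or $\mathcal{C}_x\subseteq\mathcal{C}_z$) so that both trajectories live in a common forward invariant set is a legitimate point that the paper's own argument passes over silently.
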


Combining the virtual system decomposition \eqref{eq:sys_virtual} with 
Proposition \ref{prop:partial_contraction} is useful for applications like tracking and 
state estimation, but also as an analysis tool. In fact, 
if Proposition \ref{prop:partial_contraction} holds and \eqref{eq:sys_virtual} converges 
to a given steady-state solution $z^*$ uniformly in $x$, then all solutions of \eqref{eq:sys} converge to that solution.
The conclusion of Proposition \ref{prop:partial_contraction} is a consequence of Theorem \ref{thm:inf_contraction}: 
considering the solution $\varphi^x_{t_0}(t,x_0)$ to \eqref{eq:sys} from a given initial condition $x_0\in\mathcal{C}_x$,
the dynamics \eqref{eq:sys_virtual} can be rewritten as the time-varying dynamics 
$\dot{z} = \tilde{f}(t,z) := \hat{f}(t,z,\varphi^x_{t_0}(t,x_0))$, and 
\eqref{eq:partial_contraction} guarantees that 
the conditions for incremental asymptotic stability of Theorem \ref{thm:inf_contraction} applied to $\dot{z} = \tilde{f}(t,z)$ are satisfied.
Therefore, for any given initial conditions $z_1,z_2$, the solutions $\psi^z_{t_0}(t,z_1)$ and $\psi^z_{t_0}(t,z_2)$
converge towards each other, that is, $\lim\nolimits\limits_{t\to \infty} d(\psi^z_{t_0}(t,z_1),\psi^z_{t_0}(t,z_2)) = 0$. 
The conclusion of the proposition follows by noticing that when $z_2=x_0$, we have that $\psi^z_{t_0}(t,z_2) = \psi^x_{t_0}(t,x_0)$
(since $\hat{f}(t,x,x) = f(t,x)$). Thus, from every initial condition $z_1\in\mathcal{C}_z$,
$\lim\nolimits\limits_{t\to \infty} d(\psi^z_{t_0}(t,z_1),\psi^x_{t_0}(t,x_0)) = 0$. 
Similar conditions are provided in \cite{Wang05} and \cite{Jouffroy10} for Riemannian metrics $V(z,\delta z) = \delta z^T P(z) \delta z$.

\section{LaSalle-like relaxations}
\label{sec:LaSalle}

A very first step of Lyapunov theory is to relax the strict decay of Lyapunov functions
by exploiting the invariance of limit sets. We show that this important relaxation readily 
extends to Finsler-Lyapunov functions.
We only develop the analysis for the particular case of time-invariant differential equations $\dot{x}=f(x)$.

\begin{theorem}[\emph{LaSalle invariance principle for contraction}]
 \label{thm:LaSalle_timeinvariant2}
Consider the system $\dot{x}=f(x)$ on a smooth manifold $\mathcal{M}$ with $f$ of class $C^2$,
a continuous function ${\alpha}:T \mathcal{M}\to \real_{\geq 0}$, 
and  a connected set $\mathcal{C}\subset\mathcal{M}$, 
forward invariant for $\dot{x}=f(x)$.
Let $V$ be a candidate Finsler-Lyapunov function such that,
in coordinates, 
\begin{equation}
 \label{eq:LaSalle_timeinvariant2}
  \frac{\partial V(x,\delta x)}{\partial x} f(x) + 
  \frac{\partial V(x,\delta x) }{\partial \delta x} \frac{\partial f(x)}{\partial x}\delta x 
  \leq -{\alpha}(x,\delta x)
\end{equation}
for each $x\in \mathcal{C} \subset \mathcal{M}$, and each $\delta x \in T_x \mathcal{M}$.
{ Then, for any bounded solution of $\dot{x} = f(x)$ from $\mathcal{C}$, the solutions of the variational system 
$\dot{x}=f(x)$, $\dot{\delta x} = \frac{\partial f(x)}{\partial x}\delta x$ converge
to the largest invariant set ${\Delta}$ contained in
\begin{equation}
\label{eq:LaSalle_PI}
\Pi :=\{ (x,\delta x)\in T\mathcal{M} \,|\, \alpha(x, \delta x)=0 \, , \, x\in \mathcal{C} \}.
\end{equation}
If ${\Delta} = \mathcal{C} \times \{0\}$, then 
$\dot{x} = f(x)$ is incrementally asymptotically stable on $\mathcal{C}$}
\footnote{
{Note that \eqref{eq:LaSalle_timeinvariant2} guarantees incremental stability,
thus boundedness of solutions of $\dot{x} = f(x)$ is for free whenever the system
has an equilibrium $x_e$ or a bounded steady-state solution $x^*(t)$
contained in $\mathcal{C}$.
} 
} .
\end{theorem}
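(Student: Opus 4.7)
The plan is to lift the analysis to the tangent bundle and run the classical LaSalle invariance principle for the variational (extended) system, then reuse the parameterization argument of Theorem \ref{thm:inf_contraction} to obtain incremental asymptotic stability.

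First, I would set up the autonomous extended system on $T\mathcal{M}$,
\begin{equation}
\dot{x} = f(x), \qquad \dot{\delta x} = \frac{\partial f(x)}{\partial x}\delta x,
\end{equation}
for which, by hypothesis \eqref{eq:LaSalle_timeinvariant2}, the candidate Finsler-Lyapunov function $V:T\mathcal{M}\to\real_{\geq 0}$ satisfies $\dot V(x,\delta x)\leq -\alpha(x,\delta x)\leq 0$ along trajectories. Next I would establish boundedness of the extended trajectories: boundedness of $x(t)$ is assumed, and monotonicity of $V$ combined with \eqref{eq:FinsLyap_bounds} yields $F(x(t),\delta x(t))^p\leq V(x(0),\delta x(0))/c_1$, so $\delta x(t)$ stays bounded in any natural chart around the closure of the $x$-trajectory (since $F(x,\cdot)$ is an asymmetric norm depending continuously on $x$). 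With boundedness in hand I would invoke the classical LaSalle invariance principle applied to the extended system on $T\mathcal{M}$: every bounded trajectory approaches the largest invariant set contained in $\{(x,\delta x)\in T\mathcal{M}\,|\,\dot V(x,\delta x)=0\}$, which by \eqref{eq:LaSalle_timeinvariant2} is contained in $\Pi$. This is the first conclusion.

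For the second conclusion, under the assumption $\Delta=\mathcal{C}\times\{0\}$, I would mirror the parameterization technique used in the proof of Theorem \ref{thm:inf_contraction}. Pick $x_1,x_2\in\mathcal{C}$ and a smooth curve $\bar\gamma:I\to\mathcal{C}$ connecting them with length close to $d(x_1,x_2)$, and consider the family of solutions $\psi_{t_0}(t,\bar\gamma(s))$ together with the displacement $\delta\psi_{t_0}(t,s):=\partial_s \psi_{t_0}(t,\bar\gamma(s))$. For each $s\in I$, $(\psi_{t_0}(t,\bar\gamma(s)),\delta\psi_{t_0}(t,s))$ is a bounded solution of the extended system and hence converges to $\Delta=\mathcal{C}\times\{0\}$ by the first part, so $\delta\psi_{t_0}(t,s)\to 0$ pointwise in $s$. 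Continuity of $V$ with $F(x,0)=0$ (from homogeneity) then gives $V(\psi_{t_0}(t,\bar\gamma(s)),\delta\psi_{t_0}(t,s))\to 0$ pointwise in $s$. Since $V$ is non-increasing along the extended flow,
\begin{equation}
V(\psi_{t_0}(t,\bar\gamma(s)),\delta\psi_{t_0}(t,s)) \leq V(\bar\gamma(s),\dot{\bar\gamma}(s)) \quad \forall t\geq t_0,\ s\in I,
\end{equation}
and the right-hand side is integrable in $s$. By Lebesgue's dominated convergence theorem the integral tends to zero, and using \eqref{eq:FinsLyap_bounds} together with Definition \ref{def:distance} one obtains $d(\psi_{t_0}(t,x_1),\psi_{t_0}(t,x_2))\to 0$. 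Incremental stability is already provided by step (IS) of Theorem \ref{thm:inf_contraction} since $\alpha\geq 0$, yielding incremental asymptotic stability.

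The main obstacle I expect is the transfer from the pointwise (in $s$) LaSalle conclusion for each parameterized trajectory to the decay of the \emph{integrated} distance; this is precisely where the monotonicity of $V$ along the extended flow, together with the bounds \eqref{eq:FinsLyap_bounds}, provides the uniform integrable dominating function needed for dominated convergence. A secondary technical point is invoking LaSalle on the non-compact manifold $T\mathcal{M}$; restricting the argument to the (relatively compact) closure of each bounded variational trajectory bypasses this, so the classical statement applies verbatim.
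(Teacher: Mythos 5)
Your proposal is correct and follows essentially the same route as the paper: lift to the variational system on $T\mathcal{M}$, apply the classical LaSalle invariance principle to its bounded trajectories, and then recover incremental asymptotic stability by integrating the pointwise decay of $V$ along the parameterized family of solutions from Theorem~\ref{thm:inf_contraction}. If anything, you are slightly more careful than the paper on two technical points --- deriving boundedness of $\delta x(t)$ directly from the monotonicity of $V$ and the bounds \eqref{eq:FinsLyap_bounds}, and justifying the interchange of limit and integral via dominated convergence with $V(\bar\gamma(s),\dot{\bar\gamma}(s))$ as the dominating function --- both of which the paper treats more tersely.
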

\begin{proof}
{
We adapt the proof of the LaSalle invariance theorem \cite{LaSalle1960}
by exploiting the properties of the variational system. For instance,
\textbf{(i)}~consider a bounded solution of $\dot{x}=f(x)$. By incremental stability (from \eqref{eq:LaSalle_timeinvariant2}
and Theorem \ref{thm:inf_contraction}) all the solutions of $\dot{x}=f(x)$ from $\mathcal{C}$ 
are bounded. This guarantees that, for 
any initial condition $\gamma(s)$, $\gamma:I\!\to\!\mathcal{C}$, 
$s\in I $, the displacement $\frac{\partial}{\partial s} \psi(t,\gamma(s))$ (in coordinates)
of the solution $\psi(t,\gamma(s))$ to $\dot{x}=f(x)$ 
is bounded. Therefore, 
any given solution $(x(\cdot), \delta x(\cdot))$ of
the variational system is bounded;
\textbf{(ii)}~because $\mathcal{C}$ is forward invariant and $(x(\cdot), \delta x(\cdot))$ is bounded, 
its positive limit set $L^+$ 
is a nonempty, compact, invariant set \cite[Lemma 4.1]{Khalil3};
\textbf{(iii)}~$V$ is bounded from below by $0$ and satisfies
$\frac{d}{dt} V(x(t),\delta x(t)) \leq 0$ for any given solution
$(x(\cdot),\delta x(\cdot))$ to the variational system. Thus,
$\lim_{t\to\infty} V(x(t),\delta x(t))$ exists and it is given by some value $c\in\real_{\geq 0}$.
The consequence of \textbf{(i)}-\textbf{(iii)} is that any solution $(y(\cdot),\delta y(\cdot))$ to the 
variational system from $(y(0),\delta y(0)) \in L^+$
necessarily satisfies $V(y(t),\delta y(t)) = c$ for any given $t$, which implies
$\frac{d}{dt} V(y(t),\delta y(t)) = \alpha(y(t),\delta y(t)) = 0$ for all $t$. That is,
$L^+ \subseteq \Pi$.

For incremental asymptotic stability, we have to prove
that for any given curve $\gamma:I\!\to\!\mathcal{C}$,
the solutions $\psi(t,\gamma(s))$ to $\dot{x}=f(x)$ for $s\in I$ satisfies
$\lim\nolimits\limits_{t\to \infty}\int_I F( \psi(t,\gamma(s)) , 
\frac{\partial}{\partial s}\psi(t,\gamma(s))) = 0$.
Using \eqref{eq:FinsLyap_bounds},
this is a consequence of the fact that 
$\lim\nolimits\limits_{t\to \infty}V( \psi(t,\!\gamma(s)) , 
\frac{\partial}{\partial s}\psi(t,\gamma(s))) = 
V( \psi(t,\!\gamma(s)) , 0) = 0$, 
for each $s\in I$. Note that the first identity follows from the assumption that
$\mathcal{C}\times\{0\}$ is the largest invariant set contained in $\Pi$.
}
\end{proof}
To the best of authors' knowledge, an invariance principle has not appeared in the literature on contraction. 
This illustrates the potential of a Lyapunov framework for contraction analysis.

We illustrate the use of Theorem \ref{thm:LaSalle_timeinvariant2} in the
following (linear) example, where we take advantage of classical observability
conditions. Example \ref{example:boost} illustrates a general 
class of models in power electronics for which incremental tools are 
frequently used \cite{Ramirez99}.
\begin{example}
\label{example:boost}
Consider the following averaged equations of a single-boost converter \cite{Escobar01}
\begin{equation}
\label{eq:sb_1}
\left\{\begin{array}{rcl}
L \dot{x}_L &=& -u x_C + E \\
C \dot{x}_C &=& u x_L - \frac{1}{R}x_C  
\end{array}\right.
\end{equation}
where $x_L$ is the inductor current, $x_C$ is the capacitor voltage, 
and $E$ is the input voltage. The quantities $L$, $C$, and $R$ are respectively
the inductance, the capacitance and the (load) resistance of the circuit.

We claim that for any given constant input $u^*\neq 0$, and any constant positive value of 
the circuit quantities $L$, $C$ and $R$, the system is incrementally asymptotically
stable. Note that \eqref{eq:sb_1} is a time-invariant linear system for $u=u^*$, 
so that a natural candidate Finsler-Lyapunov function is provided by the incremental energy 
$V(x,\delta x) = \frac{1}{2}(L \delta x_L^2 + C \delta x_C^2)$. In fact,
\begin{equation}
\label{eq:sb_2}
\begin{array}{l}
 {  \frac{\partial V(x,\delta x)}{\partial x} f(x,u^*) + 
  \frac{\partial V(x,\delta x) }{\partial \delta x} \frac{\partial f(x,u^*)}{\partial x}\delta x } = \\
  \qquad = \mymatrix{c}{\delta x_L \\ \delta x_C}^T 
 \mymatrix{cc}{0 & -u^* \\ u^* & -\frac{1}{R}} \mymatrix{c}{\delta x_L \\ \delta x_C} \\
\qquad = - \frac{\delta x_C^2}{R} \leq 0, 
\end{array}
\end{equation}
where $\alpha(x,\delta x) = \frac{\delta x_C^2}{R}$.
By \eqref{eq:LaSalle_PI}, considering $\psi(t,x) = e^{At} x$, we have that
$\Pi_\tau := \{(x,\delta x)\in{\real^2\times\real^2} \,|\, 
\forall t\in[0,\tau], \,
\delta x^T (e^{A t })^T  \smallmat{0 & 0\\ 0 & 1} e^{A \tau} \delta x = 0\}$.
Thus, for any given $\tau > 0$, we have that $\Pi_\tau = \real^2\times\{0\}$.
Incremental asymptotic stability follows from Theorem 2 (from the linear
nature of the system, the incremental asymptotic stability is actually exponential).
\end{example}

\begin{remark}
\label{rem:LaSalle}
For a time-varying differential equation \eqref{eq:sys}, a possible formulation of 
invariance-like conditions for asymptotic stability is given by the inequality,
{
in coordinates, 
\begin{equation}
 \label{eq:LaSalle_int}
  \frac{\partial V}{\partial x} f(t,x) + 
  \frac{\partial V }{\partial \delta x} \frac{\partial f(t,x)}{\partial x}\delta x 
  \leq -{\alpha}(t,x,\delta x)V
\end{equation} }
for each {$ t\in\real$}, $x\in \mathcal{C} \subset \mathcal{M}$, and $\delta x \in T_x \mathcal{M}$, 
where $V$ is a candidate Finsler-Lyapunov and 
${\alpha}:\real_{\geq 0}\times{T\mathcal{M}}\to \real_{\geq 0}$.
Incremental asymptotic stability on $\mathcal{C}$ holds if
\begin{equation}
\label{eq:LaSalle_int_cond}
\lim_{t\to\infty}\int_{t_0}^t \alpha(\tau,\psi_{t_0}(\tau,x),D\psi_{t_0}(\tau,x)[0,\delta x])d\tau = \infty 
\end{equation}
$\mbox{for each } x\in \mathcal{C} \mbox{ and } \delta x \in T_x\mathcal{M}. $
In general, \eqref{eq:LaSalle_int_cond} is established by relying on further analysis of the solutions 
of the system\footnote{The differential $D\psi_{t_0}(t+\tau,x)[0,\cdot]$ assigns to each tangent vector 
$\delta x\in T_{\psi_{t_0}(t,x)}\mathcal{M}$  the tangent vector 
$D\psi_{t_0}(t,x)[0, \delta x] \in T_{\psi_{t_0}(t+\tau,x)}\mathcal{M}$.
Thus, 
$D\psi_{t_0}(t,x)[0, \delta x]$ represents the evolution of the tangent vector
$\delta x$ along the solution $\psi_{t_0}$ after $\tau$ units of time. 
In coordinates, 
$D\psi_{t_0}(t+\tau,x)[0, \delta x] 
= \frac{\partial \psi_{t_0}(t,x)}{\partial x}\delta x$.
}.

By Theorem \ref{thm:inf_contraction}, \eqref{eq:LaSalle_int}
and \eqref{eq:LaSalle_int_cond}
guarantee incremental stability. To see why \eqref{eq:LaSalle_int} and \eqref{eq:LaSalle_int_cond} 
guarantee incremental asymptotic stability, one
has to follow the proof of Theorem \ref{thm:inf_contraction} up to Equation \eqref{eq:M05}, 
by replacing each quantity $\alpha(V(x,\delta x))$ by $\alpha(t,x,\delta x) V(x,\delta x)$. 
From there, using the definition 
 $\overline{\alpha}(t,s) 
 := \alpha(t,\psi_{t_0}(t,\overline{\gamma}(s)),D\psi_{t_0}(t,\overline{\gamma}(s))[0,1])$,
by comparison lemma \cite[Lemma 3.4]{Khalil3} we get
$\overline{V}(t,s) \leq e^{-\int_{t_0}^{t} \overline{\alpha}(\tau,s)d\tau}  \overline{V}(t_0,s)$ 
for all $t\geq t_0$ and $s\in I$, which combined with \eqref{eq:LaSalle_int_cond} guarantees that
\begin{equation}
\label{eq:L04}
\begin{array}{l}
 \lim\nolimits\limits_{t\to\infty} d( \psi_{t_0}(t,x_1) , \psi_{t_0}(t,x_2) ) \\
\qquad \leq  { c_1^{-\frac{1}{p}}}
\lim\nolimits\limits_{t\to\infty} \int_I e^{-\frac{1}{p}\int_{t_0}^{t} \overline{\alpha}(\tau,s)d\tau}  \overline{V}(t_0,s)^{\frac{1}{p}} ds \\
\qquad \leq { c_1^{-\frac{1}{p}}}
\Big(\max\nolimits\limits_{s\in I} \overline{V}(t_0,s)^{\frac{1}{p}}\Big) \lim\nolimits\limits_{t\to\infty} \int_I e^{-\frac{1}{p}\int_{t_0}^{t} \overline{\alpha}(\tau,s)d\tau}  ds \\
\qquad = 0 .
\end{array} \vspace{-5mm}
\end{equation} 
\end{remark}

\section{Horizontal contraction}
\label{sec:horizontal_contraction}

\subsection{Contraction and symmetries}
\label{sec:horizontal_inf_contraction}
Theorem \ref{thm:inf_contraction} 
guarantees contraction among the solutions of a system \emph{in every possible direction}. 
This result can be easily extended to capture contraction 
with respect to \emph{specific directions} -- a relevant feature
for contraction analysis in presence of symmetries like, 
for example, in synchronization problems.

The generalization of Theorem \ref{thm:inf_contraction} is based on
the introduction of horizontal Finsler-Lyapunov functions
on a manifold $\mathcal{M}$, whose {associated metrics $d$ 
(through bounds similar to \eqref{eq:FinsLyap_bounds}) are} tailored 
to the particular problem of interest. These functions are  positive
only on a suitably selected (horizontal) subspace $\mathcal{H}_x\subseteq T_x \mathcal{M}$, 
for each $x\in\mathcal{M}$, which characterize 
the set of directions (tangent vectors) taken into account by the Finsler structure.

\begin{definition}[\emph{Horizontal Finsler-Lyapunov function}]
\label{def:horizontal_LyapFins_function}
 Consider a manifold $\mathcal{M}$ of dimension $d$.
 For each $x\in\mathcal{M}$, { suppose that 
 $T_x\mathcal{M}$ can be subdivided} into a \emph{vertical distribution}
$\mathcal{V}_x\subset T_x\mathcal{M}$ 
 \begin{equation}
  \label{eq:V_x}
  \mathcal{V}_x := \mathrm{Span}(\{v_1(x),\dots, v_r(x)\}) \qquad 0\leq r<d \ ,
 \end{equation}
 and a \emph{horizontal distribution} $\mathcal{H}_x\subseteq T_x\mathcal{M}$ 
 complementary to $\mathcal{V}_x$, i.e.
 $\mathcal{V}_x \oplus \mathcal{H}_x = T_x\mathcal{M}$, 
 \begin{equation}
 \label{eq:H_x}
 \mathcal{H}_x := \mathrm{Span}(\{h_1(x),\dots, h_q(x)\}) \qquad 0 < q \leq d-r \,
 \end{equation}
 where $v_i$, $i \in \{1,\dots,r\}$, 
 and $h_i$, $i\in\{1,\dots,q\}$, are $C^1$ vector fields.
 
 A function $V:{T\mathcal{M}} \to \real_{\geq 0}$ that maps every 
 $(x,\delta x) \in {T\mathcal{M}}$ to $V(x,\delta x)\in \real_{\geq 0}$
 is a \emph{candidate horizontal Finsler-Lyapunov function for} \eqref{eq:sys} \emph{on} $\mathcal{H}_x$ if
 there exist $c_1,c_2\in \real_{\geq 0}$, $p\in\real_{\geq 1}$, 
 and a function $F:T\mathcal{M} \to \real_{\geq 0}$ such that \eqref{eq:FinsLyap_bounds}
 holds. Moreover, $V$ and $F$ satisfy the following conditions. 
 Given a set of isolated points $\Omega\subset\mathcal{M}$,
 \begin{itemize}
  \item[(i$_a$)] $V$ and $F$ are $C^1$ function for each $x\in\mathcal{M}$ 
  and $\delta x\in \mathcal{H}_x\setminus\{0\}$;
  \item[(i$_b$)]   $V$ and $F$ satisfy 
  $V(x,\delta x) = V(x,\delta x_h)$ and $F(x,\delta x) = F(x,\delta x_h)$
  for each $(x,\delta x)\in T\mathcal{M}$ such that 
  $(x,\delta x) = (x,\delta x_h) + (x,\delta x_v)$, 
  $\delta x_h \in \mathcal{H}_x$, and $\delta x_v \in \mathcal{V}_x$. 
  \item[(ii)] $F(x,\delta x) >0$ for each $x\in\mathcal{M}$ 
  and $\delta x\in \mathcal{H}_x\setminus\{0\}$.
  \item[(iii)] $F(x,\lambda \delta x) = \lambda F(x,\delta x)$   
  for each $\lambda>0$, $x\in\mathcal{M}$, 
   and $\delta x\in \mathcal{H}_x$; 
  \item[(iv)] $F(x,\delta x_1+\delta x_2) < F(x,\delta x_1) + F(x,\delta x_2)$ 
  for each $x\in \mathcal{M}$ and 
  $\delta x_1,\delta x_2\in \mathcal{H}_x\setminus\{0\}$ 
  such that $\delta x_1\neq \lambda \delta x_2$ for any given $\lambda \in \real$.
 \end{itemize} \vspace{-4.5mm}
\end{definition}
The conditions of Definition \ref{def:horizontal_LyapFins_function} 
resemble the conditions of Definition \ref{def:LyapFins_function},
particularized to horizontal tangent vectors {
$\delta x\in \mathcal{H}_x$}. 
The metric {induced by $F$ \eqref{eq:induced_distance_main}}
is only a pseudo-distance on $\mathcal{M}$ since two states $x_a,x_b\in\mathcal{M}$ may satisfy
$d(x_a,x_b)=0$ despite $x_a\neq x_b$. In fact, every piecewise differentiable curve
$\gamma\!:\!I\!\to\!\mathcal{M}$ that satisfies $\dot{\gamma}(s) \in \mathcal{V}_{\gamma(s)}$, 
for almost every $s\in I$,
also satisfies that 
$\int_I V(\gamma(s),\dot{\gamma}(s)) ds = 
{ \int_I F(\gamma(s),\dot{\gamma}(s)) ds} = 0$.
{By (i$_b$)}, the pseudo-distance $d$ measures the 
``distance'' between two given points $x_a$ and $x_b$
by considering only the horizontal component of curves
$\gamma:I\to\mathcal{M}$ connecting $x_a$ and $x_b$, 
that is, the component $\dot{\gamma}_h(s)$ of 
$\dot{\gamma}(s) = \dot{\gamma}_h(s) + \dot{\gamma}_v(s)$ where 
$\dot{\gamma}_h(s) \in\mathcal{H}_{\gamma(s)}$ and 
$\dot{\gamma}_v(s) \in\mathcal{V}_{\gamma(s)}$,
for each $s\in I$.

We can now provide the reformulation of Theorem \ref{thm:inf_contraction}
for horizontal Finsler-Lyapunov functions.

\begin{theorem}
\label{thm:horizontal_contraction}
Consider the system \eqref{eq:sys} on a smooth manifold 
$\mathcal{M}$ with $f$ of class $C^2$, 
a vertical distribution $\mathcal{V}_x$ \eqref{eq:V_x},
and a horizontal distribution $\mathcal{H}_x$ \eqref{eq:H_x}.
Let $\mathcal{C}\subseteq \mathcal{M}$ 
be a connected and forward invariant set
and ${\alpha}$ a function in $\real_{\geq 0} \to \real_{\geq 0}$.

Given a candidate horizontal Finsler-Lyapunov function $V$ 
for \eqref{eq:sys} on $\mathcal{H}_x$,
suppose that \eqref{eq:diff_contraction} holds 
for each {$t\in \real$}, each $x\in\mathcal{C}$
and each {$\delta x \in T_x\mathcal{M}$}. 
Then, the solutions to \eqref{eq:sys}
\begin{itemize}
\item[(i)]\emph{do not expand} the pseudo-distance $d$ \eqref{eq:induced_distance_main}
 on $\mathcal{C}$ 
if $\alpha(s) = 0$ for each $s\geq 0$: there exists $\gamma(s)\geq s$ such that
 $d(\psi_{t_0}(t,x_1),\psi_{t_0}(t,x_2)) \leq \gamma(d(x_1,x_2))$, 
 $\forall t_0\in \real$,  $\forall t>t_0$,$\forall x_1,x_2\in\mathcal{C}$;
 \item[(ii)]\emph{asymptotically contract} the pseudo distance $d$ on $\mathcal{C}$ 
 if $\alpha$ is a $\mathcal{K}$ function: (i) holds and 
 $\lim\nolimits\limits_{t\to \infty} d(\psi_{t_0}(t,x_1),\psi_{t_0}(t,x_2)) = 0$,
 $\forall t_0\in\real$, $\forall x_1,x_2\in\mathcal{C}$; 
  \item[(iii)]\emph{exponential contract} the pseudo distance $d$ on $\mathcal{C}$ 
if $\alpha(s) = \lambda s >0$ for each $s\geq 0$:
 there exists $K\geq 1$ s.t.
 $d(\psi_{t_0}(t,x_1),\psi_{t_0}(t,x_2)) \leq K e^{-\lambda(t-t_0)} d(x_1,x_2)$,
 $\forall t_0\in \real$,  $\forall t>t_0$,$\forall x_1,x_2\in\mathcal{C}$.
\end{itemize} \vspace{-5mm}
\end{theorem}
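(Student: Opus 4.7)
The plan is to mirror the four-step argument of Theorem \ref{thm:inf_contraction} almost verbatim, with the only adaptation being that integrals of $F$ now yield the pseudo-distance $d$ instead of a genuine distance. The essential observation is that condition (i$_b$) of Definition \ref{def:horizontal_LyapFins_function} makes $V(x,\delta x)$ and $F(x,\delta x)$ depend only on the horizontal component of $\delta x$. Consequently, the contraction inequality \eqref{eq:diff_contraction}, which by hypothesis holds for \emph{every} $\delta x \in T_x\mathcal{M}$, is precisely the right hypothesis to push the standard proof through without worrying about whether the variational flow preserves $\mathcal{H}_x$.

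More concretely, first I would fix $x_1, x_2 \in \mathcal{C}$ and $\varepsilon > 0$ and select a regular $C^2$ curve $\overline{\gamma}:I\to\mathcal{C}$ with $\overline{\gamma}(0)=x_1$, $\overline{\gamma}(1)=x_2$, and $\int_I F(\overline{\gamma}(s), \dot{\overline{\gamma}}(s))\,ds \leq (1+\varepsilon) d(x_1,x_2)$, where now $d$ is the pseudo-distance induced by $F$ through \eqref{eq:induced_distance_main}. Then, exactly as in steps (ii)–(iii) of the proof of Theorem \ref{thm:inf_contraction}, I would propagate $\overline{\gamma}$ via the flow to obtain $\psi_{t_0}(t,\overline{\gamma}(s))$ with displacement $\delta\psi_{t_0}(t,s) = \tfrac{\partial}{\partial s}\psi_{t_0}(t,\overline{\gamma}(s))$ satisfying $\tfrac{\partial}{\partial t}\delta\psi_{t_0} = \tfrac{\partial f}{\partial x}\,\delta\psi_{t_0}$, and derive the scalar differential inequality $\tfrac{d}{dt}\overline{V}(t,s) \leq -\alpha(\overline{V}(t,s))$ for $\overline{V}(t,s) := V(\psi_{t_0}(t,\overline{\gamma}(s)), \delta\psi_{t_0}(t,s))$. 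Note that at $t=t_0$ the displacement equals $\dot{\overline{\gamma}}(s)$, and condition (i$_b$) ensures that $\overline{V}(t_0,s) = V(\overline{\gamma}(s), \dot{\overline{\gamma}}(s))$ and $\overline{F}(t_0,s) = F(\overline{\gamma}(s), \dot{\overline{\gamma}}(s))$ regardless of whether $\dot{\overline{\gamma}}$ has a vertical component.

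For the conclusion, I would apply the three standard comparison arguments to $\overline{V}(t,s)$: (i) monotonicity when $\alpha\equiv 0$, (ii) $\mathcal{KL}$-estimate via \cite[Lemma 6.1]{Sontag1989} when $\alpha\in\mathcal{K}$, and (iii) exponential decay when $\alpha(s)=\lambda s$. In each case, invoking the bounds \eqref{eq:FinsLyap_bounds} and the integral definition of $d$ yields
\begin{equation*}
d(\psi_{t_0}(t,x_1),\psi_{t_0}(t,x_2)) \leq \int_I \overline{F}(t,s)\,ds \leq c_1^{-1/p} \int_I \overline{V}(t,s)^{1/p}\,ds,
\end{equation*}
which after applying the comparison estimate and using \eqref{eq:FinsLyap_bounds} at $t=t_0$ together with the $(1+\varepsilon)$-approximation produces the three claimed bounds (non-expansion with $\gamma(s) = (1+\varepsilon)(c_2/c_1)^{1/p} s$, asymptotic contraction, and exponential contraction with rate $\lambda/p$). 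Letting $\varepsilon\to 0$ absorbs the slack.

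The only subtle point, and hence the main obstacle, is philosophical rather than technical: $d$ is a pseudo-distance, so \eqref{eq:induced_distance_main} still defines the correct object and the triangle inequality used implicitly in the estimates still holds. The arithmetic of the proof is unchanged because it never required $d$ to separate points — only the bounds relating $V$ and $F$, together with the contraction inequality for \emph{all} tangent vectors. That universal quantification over $\delta x \in T_x\mathcal{M}$ in the hypothesis is what makes the reduction to the Theorem \ref{thm:inf_contraction} argument completely mechanical; had the hypothesis been restricted to $\delta x\in\mathcal{H}_x$, one would need an additional invariance-type argument to ensure $\delta\psi_{t_0}(t,s)$ stays horizontal, which in general it does not.
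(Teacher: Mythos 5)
Your proposal is correct and follows exactly the route the paper takes: its proof of Theorem \ref{thm:horizontal_contraction} is literally the repetition of the proof of Theorem \ref{thm:inf_contraction} particularized to horizontal Finsler-Lyapunov functions, with the integral of $F$ now producing a pseudo-distance. You also correctly isolate the one point worth stating, namely that the hypothesis quantifies \eqref{eq:diff_contraction} over all $\delta x\in T_x\mathcal{M}$, so no invariance of $\mathcal{H}_x$ under the variational flow is needed here (that refinement is exactly what Theorem \ref{thm:horizontal_contraction_plus_invariance} adds).
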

{
The next result particularizes Theorem \ref{thm:horizontal_contraction}
to the case in which the selected horizontal distribution is invariant 
along the dynamics of \eqref{eq:sys}. In coordinates, 
condition \eqref{eq:horizontal_invariance} below guarantees that 
$ \dot{\delta x_h} = \frac{\partial f(t,x)}{\partial x} \delta x_h$
along the solutions to \eqref{eq:sys}, which 
establishes the invariance of $\mathcal{H}_x$.
\begin{theorem}
 \label{thm:horizontal_contraction_plus_invariance}
Under the hypothesis of Theorem \ref{thm:horizontal_contraction},
consider the horizontal projection 
$\pi_x:{T_x\mathcal{M}} \to T_x\mathcal{M}$ 
that maps each $\delta x \in T_x \mathcal{M}$
to $\delta_h := \pi_x(\delta x) \in \mathcal{H}_x$.
Suppose that, in coordinates, 
$\forall t\in \real,  \forall (x,\delta x)\in T\mathcal{M},$
\begin{equation}
\label{eq:horizontal_invariance}
   \frac{\partial \pi_x(\delta x)}{\partial x}f(t,x) + 
   \frac{\partial \pi_x(\delta x)}{\partial \delta x} 
   \frac{\partial f(t,x)}{\partial x}\delta x 
   = \frac{\partial f(t,x)}{\partial x}\pi_x(\delta x);
\end{equation}
and suppose that \eqref{eq:diff_contraction} holds 
for each $t \in \real$, each $x\in\mathcal{C}$, 
and each $\delta x \in \mathcal{H}_x$. Then, 
the solutions to \eqref{eq:sys} satisfy
(i)-(iii) of Theorem \ref{thm:horizontal_contraction}.
\end{theorem}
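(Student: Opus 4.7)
The plan is to adapt the proof of Theorem \ref{thm:horizontal_contraction} by using the invariance condition \eqref{eq:horizontal_invariance} to restrict the analysis to the horizontal component of the virtual displacement, where the hypothesized contraction inequality \eqref{eq:diff_contraction} is available by assumption. Since the pseudo-distance $d$ induced by $F$ is already insensitive to vertical components of tangent vectors (by item (i$_b$) of Definition \ref{def:horizontal_LyapFins_function}), this restriction loses nothing.

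First, following the construction in the proof of Theorem \ref{thm:inf_contraction}, for any $x_1,x_2\in\mathcal{C}$ and any $\varepsilon>0$, pick a regular smooth curve $\overline{\gamma}:I\to\mathcal{C}$ with $\overline{\gamma}(0)=x_1$, $\overline{\gamma}(1)=x_2$, and $\int_I F(\overline{\gamma}(s),\dot{\overline{\gamma}}(s))\,ds \leq (1+\varepsilon)d(x_1,x_2)$. Consider the parameterized solution $\psi_{t_0}(t,\overline{\gamma}(s))$ and its virtual displacement $\delta\psi(t,s)=\frac{\partial}{\partial s}\psi_{t_0}(t,\overline{\gamma}(s))$, which, as in \eqref{eq:M04}, satisfies $\frac{\partial}{\partial t}\delta\psi = \frac{\partial f}{\partial x}\delta\psi$ along the flow.

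The key step is to introduce the horizontal component $\eta(t,s):=\pi_{\psi_{t_0}(t,\overline{\gamma}(s))}(\delta\psi(t,s))\in\mathcal{H}_{\psi_{t_0}(t,\overline{\gamma}(s))}$ and show that it evolves according to the linearized dynamics. Applying the chain rule in coordinates,
\begin{equation}
\frac{\partial}{\partial t}\eta(t,s) = \frac{\partial \pi_x(\delta x)}{\partial x}f(t,x) + \frac{\partial \pi_x(\delta x)}{\partial \delta x}\frac{\partial f(t,x)}{\partial x}\delta x,
\end{equation}
evaluated at $(x,\delta x)=(\psi_{t_0}(t,\overline{\gamma}(s)),\delta\psi(t,s))$. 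By the invariance hypothesis \eqref{eq:horizontal_invariance}, the right-hand side equals $\frac{\partial f(t,x)}{\partial x}\eta(t,s)$. Hence $\eta$ satisfies the same linearized dynamics along the flow that was used in the proof of Theorem \ref{thm:inf_contraction}, while remaining horizontal at every time.

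With $\overline{V}(t,s):=V(\psi_{t_0}(t,\overline{\gamma}(s)),\eta(t,s))$, the chain rule mirrors \eqref{eq:M05} and gives $\frac{d}{dt}\overline{V}(t,s) = \frac{\partial V}{\partial x}f + \frac{\partial V}{\partial \delta x}\frac{\partial f}{\partial x}\eta$. Since $\eta\in\mathcal{H}$, the hypothesis \eqref{eq:diff_contraction} applies and yields $\frac{d}{dt}\overline{V}(t,s)\leq -\alpha(\overline{V}(t,s))$. Using property (i$_b$), $F(\psi_{t_0}(t,\overline{\gamma}(s)),\delta\psi(t,s))=F(\psi_{t_0}(t,\overline{\gamma}(s)),\eta(t,s))$, so
\begin{equation}
d(\psi_{t_0}(t,x_1),\psi_{t_0}(t,x_2)) \leq \int_I F(\psi_{t_0}(t,\overline{\gamma}(s)),\eta(t,s))\,ds,
\end{equation}
and the bounds \eqref{eq:FinsLyap_bounds} together with the comparison lemma produce the three conclusions exactly as in the derivation of \eqref{eq:dMStab}, \eqref{eq:dMAStab2}, and \eqref{eq:dMEStab}.

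The main obstacle is the verification, in Step~2, that the invariance condition \eqref{eq:horizontal_invariance} really amounts to $\mathcal{H}_x$ being invariant along the linearized flow and produces a genuine variational trajectory for $\eta$; this is the geometric content that makes the relaxation of the contraction inequality to horizontal vectors sufficient. All remaining computations are routine adaptations of the steps already established for Theorems \ref{thm:inf_contraction} and \ref{thm:horizontal_contraction}.
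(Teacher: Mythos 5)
Your proposal is correct and follows essentially the same route as the paper's own proof: you introduce the horizontal component of the virtual displacement via the projection $\pi_x$, use \eqref{eq:horizontal_invariance} together with the chain rule to show it obeys the linearized dynamics $\dot{\delta x_h} = \frac{\partial f}{\partial x}\delta x_h$, invoke property (i$_b$) to replace $V(x,\delta x)$ by $V(x,\delta x_h)$, and then conclude by repeating the integration argument of Theorem \ref{thm:inf_contraction}. No gaps.
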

\begin{proofof}\emph{Theorems \ref{thm:horizontal_contraction} and
\ref{thm:horizontal_contraction_plus_invariance}.}
The proof of Theorem \ref{thm:horizontal_contraction} is just
the repetition of the proof of Theorem \ref{thm:inf_contraction}
particularized to horizontal Finsler-Lyapunov functions.

The proof of Theorem \ref{thm:horizontal_contraction_plus_invariance}
exploits the identity \eqref{eq:horizontal_invariance} within the
argument of the proof of Theorem \ref{thm:inf_contraction}. 
For any given curve $\gamma:I\to\mathcal{C}$, let $\psi_{t_0}(\cdot,\gamma(s))$ 
be the solution to \eqref{eq:sys} from the initial condition $\gamma(s)$
at time $t_0$.
Using coordinates, 
define 
$x(t,s):=\psi_{t_0}(t,\gamma(s))$,
and $\delta x(t,s):= \frac{\partial}{\partial s}\psi_{t_0}(t,\gamma(s))$.
Consider the decomposition of 
$\delta x(t,s)$ into $\delta x(t,s) = \delta x_h(t,s)+ \delta x_v(t,s)$, 
respectively horizontal $\delta x_h(t,s) \in \mathcal{H}_{x(t,s)}$ 
and vertical $\delta x_v(t,s) \in \mathcal{V}_{x(t,s)}$ components. 
Note that $\delta x_h(t,s) = \pi_{x(t,s)}(\delta x(t,s))$. Therefore,
mimicking \eqref{eq:M04},
\begin{equation}
\label{eq:M04bis}
\begin{array}{rcl}
\! \frac{\partial}{\partial t}\delta x_h(t,s) 
 \!&\!=\!&\! \frac{\partial}{\partial t} \pi_{x(t,s)}(\delta x(t,s)) \\
  \!&\!=\!&\! \left[\frac{\partial\pi_x(\delta x)}{\partial x}_{|x(t,s),\delta x(t,s)}\right]f(t,x(t,s)) \ +\\
  \!&\!+\!&\! \left[\frac{\partial\pi_x(\delta x)}{\partial \delta x}_{|x(t,s),\delta x(t,s)}\right]
   \left[\frac{\partial f(t,x)}{\partial x}_{|x(t,s)}\right]\delta x(t,s) \\
  \!&\!=\!&\! \left[\frac{\partial f(t,x)}{\partial x}_{|x(t,s)}\right]
     \pi_{x(t,s)}(\delta x(t,s)) \\
 \!&\!=\!&\! \left[\frac{\partial f(t,x)}{\partial x}_{|x(t,s)}\right]\delta x_h(t,s),
\end{array}
\end{equation}
where the next to the last identity follows from \eqref{eq:horizontal_invariance}.

From the assumption (ii) in Definition \ref{def:horizontal_LyapFins_function},
$V(x(t,s),\delta x(t,s)) = V(x(t,s),\delta x_h(t,s))$, thus
$\frac{d}{dt}V(x(t,s),\delta x(t,s)) = \frac{d}{dt}V(x(t,s),\delta x_h(t,s)) $
for each $t\geq t_0$, and $s\in I$. 
Therefore, mimicking \eqref{eq:M05} and using \eqref{eq:M04bis}, 
and \eqref{eq:diff_contraction}, we get
\begin{equation}
\label{eq:M05bis}
\frac{d}{dt}V(x(t,s),\delta x_h(t,s)) 
\leq  -\alpha(V(x(t,s),\delta x_h(t,s)).
\end{equation}
From this inequality, the proof of Theorem \ref{thm:horizontal_contraction_plus_invariance}
continues as the proof of Theorem \ref{thm:inf_contraction} from \eqref{eq:M05}.
\end{proofof}
}

\begin{remark}
\label{rem:relax_horiz_c1}
The formulation of the LaSalle-like relaxations of Theorem \ref{thm:LaSalle_timeinvariant2} and
Remark \ref{rem:LaSalle} in Section \ref{sec:LaSalle} immediately extends
to horizontal Finsler-Lyapunov functions. Following Remark \ref{rem:piecewise_diff},
the regularity assumption (i) in Definition \ref{def:horizontal_LyapFins_function} 
can be relaxed to functions $V$ that are piecewise continuously differentiable and locally Lipschitz.
In such a case, the goal is to show that the inequality \eqref{eq:MAStab} holds.
This is guaranteed, for example, if the inequality in \eqref{eq:M05bis}
 holds for {almost every} $t$ and $s$.
\end{remark}

\subsection{Contraction on quotient manifolds}
\label{sec:quotient}

The notion of horizontal space is classical in the theory of quotient manifolds.
Let $\mathcal{M}$ be a given manifold and 
let $\mathcal{M}\setminus\!\!\sim$ be the \emph{quotient manifold} of $\mathcal{M}$ 
induced by the \emph{equivalence relation} $\sim \in \mathcal{M}\times\mathcal{M}$.
Given $x\in\mathcal{M}$, we denote by $[x]\in {\mathcal{M}\setminus\!\sim}$
the class of equivalence to $x$. 
Suppose that the system 
$\dot{x}=f(t,x)$ in \eqref{eq:sys} is a representation on $\mathcal{M}$ of a 
system on ${\mathcal{M}\setminus\!\sim}$ in the following sense:
for every $t_0\geq 0$, every $x_0$, and every $z_0 \in [x_0]$, 
the solution $\varphi_{t_0}(\cdot,z_0)$ to \eqref{eq:sys}
satisfies $\varphi_{t_0}(t,z_0) \in [\varphi_{t_0}(t,x_0)]$ for each $t\geq t_0$.
In such a case we call $\dot{x}=f(t,x)$ a \emph{quotient system} 
on ${\mathcal{M}\setminus\!\sim}$.
The equivalence relation $\sim$ usually describes the symmetries 
on the system dynamics on $\mathcal{M}$, which implicitly characterize  
the quotient dynamics. 
Every solution $\varphi_{t_0}(\cdot,z_0)$ of \eqref{eq:sys} 
from $z_0\in[x_0]\in{\mathcal{M}\setminus\!\sim}$
is a (lifted) representation of a unique solution $[\varphi_{t_0}(\cdot,x_0)]$
on the quotient manifold. 

The vertical space $\mathcal{V}_x$ at $x$ is defined as the tangent space to the fiber through $x$.
In this way, any tangent vector $\delta[x]$ to $T_{[x]} M\setminus\!\sim$ has a 
unique representation in the horizontal space $\mathcal{H}_x$, called the horizontal lift \cite{AbsMahSep2008}.
The particular selection of the vertical distribution guarantees that 
the horizontal Finsler-Lyapunov function $V$ on $\mathcal{H}_x$ is 
zero for each $\delta x\in\mathcal{V}_x$.
As a consequence $V$ and the induced pseudo-distance $d$
can be used to characterize the incremental properties of the
quotient system:
if the pseudo-distance $d$ on $\calM$ satisfies
\begin{equation}
\label{eq:quotient_almost_distance}
d(x_1,x_2) \neq 0 \qquad \forall x_1, x_2 \in \calM \mbox{ s.t. }
[x_1] \neq [x_2],
\end{equation}
then $d$ is a distance on ${\calM\setminus\!\sim}$ and 
asymptotic contraction of 
\eqref{eq:sys} on $\mathcal{M}$ is equivalent to 
incremental asymptotic stability of the quotient system on ${\mathcal{M}\setminus\!\sim}$,
implicitly represented by \eqref{eq:sys} on $\mathcal{M}$.
In fact, \eqref{eq:quotient_almost_distance} guarantees that 
$d:\calM\times\calM\to\real_{\geq 0}$ 
is a distance on ${{\calM\setminus\!\sim}}$
since $d([x_1],[x_2]):= \inf_{z_1\in[x_1],z_2\in[x_2]}d(z_1,z_2) = d(x_1,x_2)\neq 0$,
for each $x_1,x_2 \in \calM$ such that $[x_1]\neq[x_2]$.

Suppose that Theorem \ref{thm:horizontal_contraction} holds for 
a given quotient system \eqref{eq:sys}, and suppose that the induced
pseudo-distance satisfies \eqref{eq:quotient_almost_distance}.
Then, by considering the lifted solutions of \eqref{eq:sys} to ${\mathcal{M}\setminus\!\sim}$,
the system  \eqref{eq:sys} is
\emph{(i)}~\emph{incrementally stable} on $\mathcal{C}$ 
if $\alpha(s) = 0$ for each $s\geq 0$;
\emph{(ii)}~\emph{incrementally asymptotically stable} on $\mathcal{C}$ 
if $\alpha$ is a $\mathcal{K}$ function; and 
\emph{(iii)}~\emph{incrementally  exponential stable} on $\mathcal{C}$ 
if $\alpha(s) = \lambda s >0$ for each $s\geq 0$.
In this sense, horizontal contraction in the total space is a convenient way to study contraction on quotient systems.
\begin{remark}
A sufficient condition to guarantee that the pseudo-distance 
$d$ on $\calM$ is a distance on ${\calM\setminus\!\sim}$
is to require that $F$ in Definition
\ref{def:horizontal_LyapFins_function}
is a Finsler structure on ${\calM\setminus\!\sim}$.
For instance, remember that $\mathcal{V}_x$ at $x$ 
is defined as the tangent space to the fiber through $x$,
and call \emph{fiber function} any function
$g:\calM\to\calM$ that maps every $z\in [x]$ into $g(z) \in [x]$,
for each ${[x]\in\calM\setminus\!\sim}$. Then,
$F$ is a Finsler structure on ${\calM\setminus\!\sim}$
if $F(x,\delta x) = F(g(x),Dg(x)[\delta x])$
for any fiber function $g$ and any $(x,\delta x) \in T\calM$
(which establishes the invariance of $F$ along the fiber of the quotient manifold).
\end{remark}

Quotient systems are encountered in many applications including tracking, coordination, and synchronization.
The potential of horizontal contraction in such applications is illustrated by two popular examples.

\begin{example}[\emph{Consensus}]\\
\label{ex:consensus}We consider consensus algorithms of the form
\begin{equation}
\label{eq:cons_sys}
 \dot{x} = A(t)x  
\end{equation}
where $x\in\realn$ and, for each $t\geq 0$, 
$A(t)$ has nonnegative off-diagonal elements 
and row sums zero (we assume that $A(t)$ is continuously differentiable). 
These Metzler matrices \cite{Moreau2004} 
are typically used to model the graph topology of
network problems. Indeed, the $\delta$\emph{-graph of}$A(t)$ 
has an edge from the node $i$ to the node $j$, $i\neq j$, 
if $a_{ij}(t) \geq \delta \geq 0$.

Given $\mathbf{1} := \smallmat{1 & \dots & 1}^T$, the row sums equal to zero 
guarantee that $A(t)\mathbf{1} = 0$ for each $t\geq 0$. Indeed,
$\alpha \mathbf{1}$ is a consensus state of the network for every $\alpha \in \real$.
Because of this symmetry, \eqref{eq:cons_sys} represents a 
quotient system on the quotient manifold $\realn\setminus\sim$ constructed from 
the equivalence  $x\sim y$ iff  
$x-y = \alpha \mathbf{1}$, for some $\alpha\geq 0$.
In fact, if $x\sim y$ then $A(t)x= A(t)y$ for
each $t\geq 0$.
The elements of $\realn\setminus\sim$ are 
$[x] := \{x+\alpha \mathbf{1}\,|\, \alpha\in\real\}$,
the vertical space is given by 
$\mathcal{V}_x := \mathrm{Span}(\{\mathbf{1}\})$,
and the horizontal space can be taken as
$\mathcal{H}_x := 
\{\delta x\in\realn \,|\, \mathbf{1}^T \delta x=0\} 
= \mathcal{V}_x^{\perp}$.
{
 \eqref{eq:cons_sys} is also a time-varying monotone system \cite{Smith1995,Angeli2003},
 and its stability properties have been studied by many authors \cite{Moreau2004,Tsitsiklis86}.
Under uniform connectivity assumptions 
its solutions converge exponentially to the submanifold of equilibria given 
by  $[0]=\{\alpha \mathbf{1}\,|\, \alpha\in\real\}$, \cite[Section 2.2 and Theorem 1]{Moreau2004}. 
We revisit this classical example through a differential approach.}

Consider the displacements dynamics from \eqref{eq:cons_sys}  given by
$\dot{\delta x} = A(t) \delta x$, and the horizontal Finsler-Lyapunov function 
\begin{equation}
\label{eq:cons_V}
 V(x,\delta x) := \max\nolimits\limits_i\,\delta x_i - \min\nolimits\limits_i\,\delta x_i,
\end{equation}
that coincides with the classical consensus function adopted in \cite{Moreau2004,Tsitsiklis86} 
lifted to the tangent space.
See \cite{Sepulchre10b} for its relationship to the Hilbert projection metric, known to contract
along monotone mapping \cite{Bushell73}.
Note that $V$ satisfies every condition of 
Definition \ref{def:horizontal_LyapFins_function}
but continuous differentiability.
In particular, $V$ is positive and homogeneous
for every $\delta x\in \mathcal{H}_x$.
For $\delta x \in T_x\realn$,
$V(x,\delta x) = V(x,\delta x_h)$ with 
$\delta x_h $ horizontal component of $\delta x$,
since $V(x,\delta x_h+\alpha\mathbf{1}) 
= V(x,\delta x_h)$ for each $\alpha\in\real$. 

Following Remark \ref{rem:relax_horiz_c1}, the lack of differentiability is not an issue.
In fact, from \cite[Section 3.3]{Moreau2004}, for any initial condition 
$x_0\in\realn$ and any initial tangent vector $\delta x_0\in T_x\realn = \realn$, 
$V$ is non-increasing along the solution $\varphi_{t_0}(\cdot,x_0)$ to \eqref{eq:cons_sys},
namely $V(\varphi_{t_0}(t,x),D\varphi_{t_0}(t,x_0)[0,\delta x_0]) \leq V(x_0,\delta x_0)$ for each $t\geq t_0$.
This inequality is the result of the combination of \cite[Section 3.3]{Moreau2004}, showing that
$\max\nolimits\limits_i z_i - \min\nolimits\limits_i z_i$ is non-increasing for $\dot{z} = A(t)z$, 
and of the fact that the evolution $D\varphi_{t_0}(t,x_0)[0,\delta x_0]$ 
of $\delta x_0$ along the solution $\varphi_{t_0}(\cdot,x_0)$ 
is also a solution to the differential equation $\dot{\delta x} = A(t) \delta x$ (as shown in \eqref{eq:M04}).

By the same argument, exponential decreasing of $V$ is achieved under additional conditions on uniform connectivity 
on the adjacency matrix $A(t)$. Following \cite[Theorem 1]{Moreau2004}, 
define $A^*(t) := \int_t^{t+T} A(\tau) d\tau$ and suppose that 
there exist $k \in\{1,\dots,n\}$, $\delta >0$, and $T>0$ such that, 
for every $t\geq t_0$ and every $j \in\{1,\dots,n\}\setminus\{k\}$, there is a path
from the node $k$ to the node $j$ of the $\delta$-graph of $A^*(t)$. 
Then $V$ decreases exponentially along the solutions to \eqref{eq:cons_sys}. By integration, 
the quotient system defined by \eqref{eq:cons_sys} is incrementally exponentially
stable. As a corollary, every solution to the quotient system converges to the steady-state solution 
$[0]$, that is, every solution to \eqref{eq:cons_sys} exponentially converges to consensus.

The reader will notice that 
the incremental exponential stability of \eqref{eq:cons_sys} is
a straightforward consequence of the exponential stability results of \cite{Moreau2004}, through
the lifting to the tangent space of the (non-quadratic) Lyapunov function used in \cite{Moreau2004}. 
In this sense, the differential framework captures the equivalence 
on linear systems between stability and incremental stability. 
\end{example}

\begin{example}[\emph{Phase Synchronization}] \\
Consider the interconnection of 
$n$ agents $\dot{\theta}_k = u_k$, $\theta_k \in \mathbb{S}^1$ (phase),
given by
\begin{equation}
\label{eq:KUR1}
 \dot{\theta}_k = \frac{1}{n}\sum_{j=1}^n  \sin(\theta_j - \theta_k).
\end{equation}
Using $ s_{jk} := \sin(\theta_j - \theta_k)$, 
$ c_{jk} := \cos(\theta_j - \theta_k)$,
$\mathbf{1} := \smallmat{1 & \dots & 1}^T$,
the aggregate state $\theta := \smallmat{\theta_1 & \dots & \theta_n}^T$,
and the displacement vector 
$\delta\theta := \smallmat{\delta\theta_1 & \dots & \delta\theta_n}^T$,
\eqref{eq:KUR1} and the related displacement dynamics can be written as follows.
\begin{equation}
\label{eq:KUR2}
\begin{array}{l}
 \dot{\theta} = \underbrace{\frac{1}{n}\smallmat{
0 & s_{21} & \cdots & s_{n1} \\
s_{12} & 0 & \cdots & s_{n2} \\
\vdots & \vdots & \ddots & \vdots \\
s_{n1} & s_{n2} & \cdots & 0 \\
}}_{=:\mathbf{S}(\theta)}\mathbf{1}  
\\
\dot{\delta \theta} = \underbrace{\frac{1}{n}\smallmat{
-\sum\nolimits\limits_{j\neq 1} c_{j1} & c_{21} & \cdots & c_{n1} \\
c_{12} & -\sum\nolimits\limits_{j\neq 2} c_{j2}& \cdots & c_{n2} \\
\vdots & \vdots & \ddots & \vdots \\
c_{n1} & c_{n2} & \cdots & -\sum\nolimits\limits_{j\neq n} c_{jn}\\
}}_{=:\mathbf{C}(\theta)} \delta \theta.
\end{array}
\end{equation}

\eqref{eq:KUR2} is a quotient system based on the equivalence 
$\theta \sim \overline{\theta}$ iff there exists $\alpha \in \real$ such that $\theta-\overline{\theta}=\mathbf{1}\alpha$. 
In fact, $\mathbf{S}(\theta) = \mathbf{S}(\theta+\alpha\mathbf{1})$, which fixes the class of equivalence
$[\theta] = \{\theta+\alpha\mathbf{1} \,|\, \alpha \in \real \}$, and the vertical space $\mathcal{V}_\theta := \mathrm{Span}\{\mathbf{1}\}$.
As in the previous example we consider 
$\mathcal{H}_\theta :=  \mathcal{V}_\theta^{\perp} =
\{\delta \theta\in\realn \,|\,  \mathbf{1}^T \delta \theta=0\}$. 

{
Paralleling Example \ref{ex:oscillator}, we contrast the conclusions obtained with constant
and non-constant Finsler-Lyapunov functions.
It is well known that the open set $\mathcal{O}\subset\mathbb{S}^n$ given by phase vectors $\theta$ such that
$|\theta_j-\theta_k| < \frac{\Pi}{2}$ for each $j,k\in\mathbf\{1,\dots,n\}$,
is forward invariant. Thus,
 \eqref{eq:KUR2} contracts the horizontal
{constant} quadratic function $V(\theta,\delta \theta) := \delta \theta^T [I_n - \frac{\mathbf{1}\mathbf{1}^T}{n}] \delta \theta$ in $\mathcal{O}$,
as shown in \cite[Proposition 1]{Moreau2004}
($\mathbf{C}(\psi(t,\theta_0))$ is a symmetric Metzler matrix 
along solutions $\psi(t,\theta_0)$ for $\theta_0\in \mathcal{O}$).}
Almost global contraction can be established by considering 
the horizontal non-constant function given by the \emph{{non-constant}} metric
\begin{equation}
\label{eq:KUR3}
 V(\theta,\delta \theta) := \frac{1}{\rho^{2q}}\delta \theta^T \Pi \delta \theta,\qquad q\in \mathbb{N},
\end{equation} 
where $\Pi:= \left[I_n - \frac{\mathbf{1}\mathbf{1}^T}{n}\right]$
(note that $\Pi\delta\theta=0$ for $\delta \theta\in\mathcal{V}_\theta$), 
and $\rho$ is the magnitude of the \emph{centroid} 
$\rho e^{i\phi} := \frac{1}{n}\sum\nolimits\limits_{k=1}^n e^{i\theta_k}$.  Following \cite{Sepulchre07kuramoto}, $\rho\in[0,1]$
is a measure of synchrony of the phase variables, since $\rho$ is $1$ when all phases coincide, while $\rho$ is $0$ when
the phases are balanced. $\rho$ is also nondecreasing, since
$\dot{\rho} = \frac{\rho}{n} \sum\nolimits\limits_{k=1}^n \sin(\theta_k-\phi)^2$. In particular, $\dot{\rho}=0$ 
for $\rho =0$ (balanced phases) or for
$\sum\nolimits\limits_{k=1}^n \sin(\theta_k-\phi)^2=0$, which occurs on isolated 
critical points given by $n-m$ phases synchronized at $\phi + 2j\pi$ and $m$ 
phases synchronized at $\phi+\pi+2j\pi$, for $j\in \mathbb{N}$ and $0\leq m\leq \frac{n}{2}$
Synchronization is achieved for $m=0$, the other critical points are saddle points
(for an extended analysis see \cite[Section III]{Sepulchre07kuramoto}).

Using $\dot{V}$ to denote the left-hand side of \eqref{eq:diff_contraction}, we get
\begin{equation}
\label{eq:KUR4}
\begin{array}{rcl}
 \!\dot{V} 
 \!\!&\!\!=\!\!&\!\! \frac{1}{\rho^{2q}} \delta \theta^T \!\! \left( \!
-\frac{2q}{n} \sum\nolimits\limits_{k=1}^n \sin(\theta_k-\phi)^2 \Pi 
\!+\! \Pi \mathbf{C}(\theta) \!+\! \mathbf{C}(\theta) \Pi \!\right) \delta \theta \\
\!\!&\!\!=\!\!&\!\! \frac{2}{\rho^{2q}} \delta \theta^T \left(
-\frac{q}{n} \sum\nolimits\limits_{k=1}^n \sin(\theta_k-\phi)^2 \Pi 
+  \mathbf{C}(\theta)  \right) \delta \theta.
\end{array}
\end{equation}
For each $\theta\in\mathbb{S}^n$, $\dot{V} =0$ for $\delta \theta \in \mathcal{V}_\theta$.
$\dot{V}$ is negative for $\theta\in\mathcal{O}$ and $\delta \theta \in \mathcal{H}_\theta$. 
For $\theta \in  \mathbb{S}^n\setminus\mathcal{O}$ and $\delta \theta \in \mathcal{H}_\theta$,
$q$ can be suitably chosen to balance the presence of positive eigenvalues in $\mathbf{C}(\theta)$.
In fact, given any compact and forward invariant set $\mathcal{C}\subset \mathbb{S}^n$
that does not contain any balanced phase ($\rho = 0$) or saddle point
($\sum\nolimits\limits_{k=1}^n \sin(\theta_k-\phi)^2=0$), 
there exists a sufficiently small
$\varepsilon >0$ such that 
$\sum\nolimits\limits_{k=1}^n \sin(\theta_k-\phi)^2>\varepsilon$ and $\rho >0$ for every $\theta\in\mathcal{C}$. 
Thus, contraction on $\mathcal{C}$ is established by picking $q \geq \frac{2}{\varepsilon}$.

The pseudo-distance {induced by $F=\sqrt{V}$} on $\mathbb{S}^n$ is a distance 
on the quotient manifold $\mathbb{S}^n\setminus \mathbb{S}$. 
Thus, the analysis
above establishes incremental asymptotic stability of the 
quotient system represented by \eqref{eq:KUR1}
in every forward invariant region $\mathcal{C}$ that does not contain
the balanced phase point and saddle points.
\end{example}

 \begin{remark}
By splitting the 
 tangent bundle into a contracting (horizontal) and
 a non-contracting (vertical) sub-bundles,
 horizontal contraction makes contact to the theory
 of Anosov flows \cite{Smale1967,Plante1972}
 (extended to Finsler manifolds).
 The references  \cite{Mierczynski1991} and \cite{Mierczynski1995}
 provide early results on horizontal contraction, where
 Finsler structures 
 are exploited to
 study the asymptotic properties of cooperative systems 
 with a first integral, 
 namely a function $H:\mathcal{M}\to \real$,
 constant along the system dynamics.
 It is obvious that no contraction can be expected in directions transversal
 to the level sets of $H$. Those directions are excluded from the contraction 
 analysis by picking a horizontal distribution tangent to the level set.
 Likewise, results on synchronization based on the combination 
 of contraction analysis and systems symmetries (via projective metrics) 
 are proposed in \cite{Pham07} and \cite{Russo2011a}. For example, 
 convergence to flow-invariant linear submanifolds 
 is a key property for the analysis of synchronization 
 problems \cite[Section 3]{Pham07}, which is established by 
 contraction analysis on a suitably projected dynamics 
 \cite[Sections 2.2 and 2.3]{Pham07}. 
 \end{remark}
}

\subsection{Forward contraction}
\label{sec:attractors}

The use of horizontal contraction is not restricted to quotient systems or
systems with first integrals. We briefly
discuss in this section the concept of \emph{forward} contraction of $\dot{x} = f(x)$,
that we define as horizontal contraction for the particular case
\begin{equation}
\label{eq:horizontal_flow_direction}
\mathcal{H}_x := \mathrm{Span}(\{f(x)\}),\quad \mbox{for each }x\in\mathcal{M}.
\end{equation}
By definition, forward contraction captures the property that for every solution
$\varphi(\cdot,x_0)$ to $\dot{x}=f(x)$, $x_0\in\mathcal{M}$,
and every $T \geq 0$, the points $\varphi(t+T,x_0)$ and $\varphi(t,x_0)$ converge to each other
as $t\to\infty$. 
This property has strong implications for the limit set of $\dot{x} = f(x)$, as illustrated
by the following proposition.
Restricting the analysis to time-invariant systems $\dot{x}=f(x)$ for simplicity,
we propose a novel result on attractor analysis by exploiting 
forward contraction. 
{ 
The result take advantage of the fact that the horizontal distribution
$\mathcal{H}_x$ in \eqref{eq:horizontal_flow_direction} 
is invariant along the dynamics of the system,
in the sense of \eqref{eq:horizontal_invariance}
\footnote{
{ 
Using coordinates, take the 
projection $\pi_x(\delta x) := \sigma(x,\delta x) f(x)$, 
where
$\sigma(x,\delta x) := \frac{f^T(x)\delta x}{f^T(x)f(x)}$.
To establish \eqref{eq:horizontal_invariance}, note that 
$\frac{\partial \sigma(x,\delta x)}{\partial x} f(x)
+ \frac{\partial \sigma(x,\delta x)}{\partial \delta x} 
\frac{\partial f(x)}{\partial x}\delta x = 0$.
Therefore, 
$\frac{\partial \pi_x(\delta x)}{\partial x}f(x) + 
\frac{\partial \pi_x(\delta x)}{\delta \partial x} 
\frac{\partial f(x)}{\partial x}\delta x 
= \sigma(x,\delta x) \frac{\partial f(x)}{\partial x} f(x)
= \frac{\partial f(x)}{\partial x} \pi_x(\delta x)$.
}
}.
}

\begin{proposition}[\emph{Bendixson's like criterion}]
\label{prop:Bendixson}
Consider the system $\dot{x}=f(x)$
on a smooth manifold $\mathcal{M}$ with $f$ of class $C^2$, and
a forward invariant set $\mathcal{C}\subseteq \mathcal{M}$.
Given a $\mathcal{K}$ function $\alpha$
and a candidate horizontal Finsler-Lyapunov function on 
$\mathcal{H}_x$ in \eqref{eq:horizontal_flow_direction},
suppose that {  Theorem \ref{thm:horizontal_contraction_plus_invariance} } 
holds for $\dot{x}=f(x)$.
Then, no solution of $\dot{x}=f(x)$ in $\mathcal{C}$ is a periodic orbit.
\end{proposition}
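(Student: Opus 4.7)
The plan is to argue by contradiction. Suppose a solution $\varphi(\cdot, x_0)$ in $\mathcal{C}$ is a periodic orbit with minimal period $T>0$. The idea is to produce, from the orbit itself, a closed horizontal curve whose Finsler length is simultaneously strictly positive and, by forward contraction, forced to decay to zero.

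To build this curve, I would define $\gamma:[0,1]\to \mathcal{C}$ by $\gamma(s) := \varphi(sT, x_0)$. Periodicity gives $\gamma(0)=\gamma(1)=x_0$, and $\dot{\gamma}(s) = T f(\gamma(s)) \in \mathcal{H}_{\gamma(s)}$ for every $s$. Propagating by the flow, set $\gamma_t(s):= \varphi(t,\gamma(s)) = \varphi(t+sT,x_0)$; its velocity $\dot{\gamma}_t(s) = T f(\gamma_t(s))$ stays horizontal, and the homogeneity property (iii) together with the substitution $u=t+sT$ yields
\begin{equation}
L(t) := \int_0^1 F(\gamma_t(s),\dot{\gamma}_t(s))\,ds = \int_t^{t+T} F(\varphi(u,x_0),f(\varphi(u,x_0)))\,du.
\end{equation}
By $T$-periodicity of $\varphi(\cdot,x_0)$ the integrand is $T$-periodic, hence $L(t)\equiv L(0)$. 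Moreover, since $f$ is nowhere zero on a non-constant periodic orbit, condition (ii) of Definition \ref{def:horizontal_LyapFins_function} gives $L(0) > 0$.

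The contradiction then follows from Theorem \ref{thm:horizontal_contraction_plus_invariance}. The horizontal distribution \eqref{eq:horizontal_flow_direction} is invariant along the flow, as verified in the footnote immediately preceding the proposition, so the argument behind \eqref{eq:M05bis} applies to $\overline{V}(t,s):=V(\gamma_t(s),\dot{\gamma}_t(s))$, giving $\frac{d}{dt}\overline{V}(t,s) \leq -\alpha(\overline{V}(t,s))$. Since $\alpha\in\mathcal{K}$, a comparison argument yields $\overline{V}(t,s)\to 0$ as $t\to\infty$ for each fixed $s$, and the bounds \eqref{eq:FinsLyap_bounds} promote this to $F(\gamma_t(s),\dot{\gamma}_t(s))\to 0$ pointwise in $s$. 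A dominated convergence argument, with majorant $F(\gamma_0(s),\dot{\gamma}_0(s))$, then forces $L(t)\to 0$, contradicting $L(t)\equiv L(0)>0$.

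The main difficulty I anticipate is justifying the passage to the limit inside the integral defining $L(t)$; the required domination rests on the monotone decay of $\overline{V}(\cdot,s)$ combined with the two-sided bound \eqref{eq:FinsLyap_bounds}, which furnishes an $L^1$ majorant since $L(0)<\infty$. The remaining steps are a direct unwinding of definitions around the natural closed horizontal curve that the orbit traces under the flow.
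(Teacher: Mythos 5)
Your proof is correct and rests on the same mechanism as the paper's: along a nontrivial periodic orbit the tangent vector $f(\varphi(t,x_0))$ is a nonvanishing \emph{horizontal} solution of the variational system, so the contraction hypothesis drives $V$ evaluated on it to zero, while periodicity forbids this. The difference is purely in packaging. The paper executes the contradiction pointwise: compactness of the orbit $\Gamma$ and continuity of $V$ give a uniform lower bound $m \leq V(x,f(x))$ on $\Gamma$, which the $\mathcal{KL}$ decay estimate violates. You instead integrate $F$ over the closed loop traced by the orbit and show its Finsler length is simultaneously a positive flow-invariant constant and convergent to zero; this buys a slightly more geometric statement (the loop's length cannot persist) at the cost of the homogeneity/reparameterization bookkeeping and a passage-to-the-limit inside the integral. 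For that last step you do not actually need dominated convergence: $\overline{V}(0,\cdot)$ is continuous on $[0,1]$, hence bounded by some $M$, so $\overline{V}(t,s)\leq \beta(M,t)$ uniformly in $s$ and the convergence $L(t)\to 0$ is immediate. Both your argument and the paper's tacitly use $c_1>0$ in \eqref{eq:FinsLyap_bounds} to pass between positivity of $F$ and positivity of $V$, so this is not a gap specific to your proof.
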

\begin{proof}
Suppose that from $x_0\in\mathcal{C}$, the solution $\varphi(\cdot,x_0)$
is a periodic orbit $\Gamma$. 
Then, from the definition of
$\mathcal{H}_x$ and the continuity of $V$, 
there exist $m>0$ such that  
$m \leq V(x,f(x))$ for each $x\in\Gamma$ ($\Gamma$ is a compact set). 
From \eqref{eq:diff_contraction}, the definition $\mathcal{H}_x$, and
the fact that $\alpha$ is a function of class $\mathcal{K}$, 
there exists a class $\mathcal{KL}$ function $\beta$ such that
$ m \leq 
\lim\nolimits\limits_{t\to\infty} 
V(\psi(t,x_0),f(\psi(t,x_0))) 
\leq 
\lim\nolimits\limits_{t\to\infty} 
\beta(V(\psi(0,x_0),f(\psi(0,x_0))), t) =0 $.
A contradiction.
\end{proof}

 Forward contraction makes contact to a vast body of theory, primarily
 motivated by the Jacobian conjecture \cite{Chamberland97}.
 Conditions to establish the absence of periodic orbits are proposed in 
 \cite{Smith86} (see e.g. Theorem 7) and \cite{Muldowney90}, 
 and are based on specific matrix measures. 
 The connection to 
 Theorem \ref{thm:inf_contraction} can be established along the lines of 
 Section \ref{sec:literature_comparison}. 
 These conditions are generalized in \cite{Li93}, which connects the absence of
 periodic orbits to the contraction of a suitably defined functional
 $S$ in the manifold tangent bundle, as shown in \cite[Sections 2 and 3]{Li93}.
 In a similar way, Proposition \ref{prop:Bendixson} 
 relates the absence of periodic orbits to the contraction of 
 a horizontal Finsler-Lyapunov function $V$ on $\mathcal{H}_x =\mathrm{Span}\{f(x)\}$.
 Results on periodic orbits based on Finsler structures can be
 found already in the early work of \cite{Mierczynski1987}.

Under the assumption of boundedness of the solutions to $\dot{x}=f(x)$,
the absence of periodic orbit induced by the contraction argument is exploited 
in the next proposition to guarantee that a given set $\mathcal{A}$ is
asymptotically attractive.

\begin{proposition}[\emph{Asymptotic attractor on} $\mathcal{C}$]
\label{prop:asymptotic_attractiveness}
Consider the system $\dot{x}=f(x)$ on a smooth 
manifold $\mathcal{M}$ with $f$ of class $C^2$,
a forward invariant set $\mathcal{C}\subseteq \mathcal{M}$,
and a forward invariant set (attractor) $\mathcal{A} \subseteq \mathcal{\mathcal{C}}$.
Given a  $\mathcal{K}$ function $\alpha$
and a candidate horizontal Finsler-Lyapunov function on 
$\mathcal{H}_x$ in \eqref{eq:horizontal_flow_direction}, 
suppose that Theorem \ref{thm:horizontal_contraction_plus_invariance}  holds
for $\dot{x}=f(x)$, with the relaxed condition that 
\eqref{eq:diff_contraction} holds
for each $x\in\mathcal{C}\setminus \mathcal{A}$, 
and each $\delta x\in \mathcal{H}_x$. 
If
\begin{itemize}
\item $\mathcal{A}$ contains every equilibrium point $0=f(x)$, $x\in \mathcal{C}$; 
\item for every initial time $t_0$ and every initial condition $x_0\in\mathcal{C}$,
there exists a bounded set $\mathcal{U}_{x_0}\subseteq \mathcal{M}$ such that 
$\psi(t,x_0) \in \mathcal{U}_{x_0}$ for each $t\geq 0$,
\end{itemize}
then 
for every initial condition $x_0\in\mathcal{C}$,
and every neighborhood $\mathcal{U}\supset\mathcal{A}$, 
there exists $T_{(x_0,\mathcal{U})} \geq 0$ such that 
$\psi(t,x_0) \in \mathcal{U}$ for each $t\geq T_{(x_0,\mathcal{U})}$.
\end{proposition}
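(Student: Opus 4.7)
The plan is to use forward contraction to drive the vector field along the trajectory to zero whenever the trajectory stays away from $\mathcal{A}$, and then to combine this with boundedness and the fact that $\mathcal{A}$ contains every equilibrium to show that the positive limit set of the trajectory sits inside $\mathcal{A}$.

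Fix $x_0 \in \mathcal{C}$. The trajectory $\psi(\cdot,x_0)$ is contained in the bounded set $\mathcal{U}_{x_0}$, so its positive limit set $L^+(x_0)$ is nonempty, compact, and invariant. Two cases arise. If $\psi(t^*,x_0) \in \mathcal{A}$ for some $t^* \geq 0$, then forward invariance of $\mathcal{A}$ yields $\psi(t,x_0) \in \mathcal{A}\subseteq\mathcal{U}$ for every $t \geq t^*$, and the conclusion holds with $T_{(x_0,\mathcal{U})}=t^*$. The substantive case is when $\psi(t,x_0) \in \mathcal{C}\setminus\mathcal{A}$ for every $t\geq 0$.

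In that case, introduce $W(t) := V(\psi(t,x_0), f(\psi(t,x_0)))$. Since $\mathcal{H}_x = \mathrm{Span}(\{f(x)\})$ is invariant along the flow (as noted in the footnote preceding Proposition \ref{prop:Bendixson}), the curve $t\mapsto f(\psi(t,x_0))$ satisfies the same variational equation as $D\psi(t,x_0)[0,f(x_0)]$ with the same initial condition, so the two coincide; this is a direct consequence of differentiating $f(\psi(t,x_0))$ by the chain rule. The relaxed contraction condition \eqref{eq:diff_contraction}, valid on $\mathcal{C}\setminus\mathcal{A}$, then yields $\dot{W}(t) \leq -\alpha(W(t))$ for every $t\geq 0$. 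Since $\alpha$ is of class $\mathcal{K}$, a standard comparison argument (as in the proof of Theorem \ref{thm:inf_contraction}, step (iv)) gives $\lim_{t\to\infty} W(t) = 0$, and the lower bound $W(t) \geq c_1 F(\psi(t,x_0),f(\psi(t,x_0)))^p$ from \eqref{eq:FinsLyap_bounds} implies $F(\psi(t,x_0),f(\psi(t,x_0))) \to 0$.

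Now take any $p \in L^+(x_0)$ and a sequence $t_k\to\infty$ with $\psi(t_k,x_0)\to p$. By continuity of $f$ and of $F$ on the horizontal bundle (Item (i$_a$) in Definition \ref{def:horizontal_LyapFins_function}), $F(p,f(p)) = \lim_k F(\psi(t_k,x_0),f(\psi(t_k,x_0))) = 0$. Property (ii) then forces $f(p) = 0$, so $p$ is an equilibrium and, by the first hypothesis, $p \in \mathcal{A}$. Thus $L^+(x_0) \subseteq \mathcal{A} \subseteq \mathcal{U}$. Finally, since $\mathcal{U}$ is open and $\psi(\cdot,x_0)$ is bounded, the standard argument that a bounded trajectory approaches its positive limit set concludes the proof: if there were a sequence $\tau_k\to\infty$ with $\psi(\tau_k,x_0)\notin\mathcal{U}$, by boundedness we could extract a convergent subsequence whose limit would be an element of $L^+(x_0)$ outside the open set $\mathcal{U}$, contradicting $L^+(x_0)\subseteq\mathcal{U}$.

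The main obstacle is the passage from $F(\psi(t,x_0),f(\psi(t,x_0)))\to 0$ to $f(p)=0$ on accumulation points, which relies crucially on $F$ being continuous (hence lower semicontinuous) on the horizontal bundle and positive definite off the zero section. A minor technicality is that the contraction inequality is only assumed on $\mathcal{C}\setminus\mathcal{A}$, but forward invariance of $\mathcal{A}$ prevents the trajectory from re-entering $\mathcal{C}\setminus\mathcal{A}$ after visiting $\mathcal{A}$, so the dichotomy above is exhaustive and no interleaved analysis is needed.
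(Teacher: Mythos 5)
Your proof is correct, but it follows a genuinely different route from the paper's. The paper argues by contradiction on the $\omega$-limit set: it first notes that $\omega^+(x_0)\setminus\mathcal{A}$ contains no equilibria, then (in two cases, according to whether $\omega^+(x_0)$ meets $\mathcal{A}$) extracts a uniform lower bound $m\le V(x,f(x))$ on the relevant compact portion of $\omega^+(x_0)$ and contradicts it by running the contraction estimate along a solution \emph{initialized inside the limit set}. You instead work directly along the original trajectory: after disposing of the easy case where $\psi(\cdot,x_0)$ enters the forward-invariant set $\mathcal{A}$, you show $W(t)=V(\psi(t,x_0),f(\psi(t,x_0)))$ obeys $\dot W\le-\alpha(W)$ (using that $t\mapsto f(\psi(t,x_0))$ solves the variational equation), conclude $F(\psi(t,x_0),f(\psi(t,x_0)))\to 0$, and deduce by continuity and positivity of $F$ on $\mathcal{H}_x\setminus\{0\}$ that every limit point is an equilibrium, hence lies in $\mathcal{A}$. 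Your route buys two things: it sidesteps the delicate point in the paper's second case, where a solution started in $\omega^+(x_0)\setminus\mathcal{A}$ could in principle enter $\mathcal{A}$, where the relaxed inequality \eqref{eq:diff_contraction} is not assumed; and it yields the slightly stronger conclusion that, when the trajectory never meets $\mathcal{A}$, its limit set consists entirely of equilibria. The paper's route, by contrast, needs only the behaviour of $V$ on the limit set and does not require identifying limit points as equilibria. Both arguments share the same unaddressed technicalities: limit points of a trajectory in $\mathcal{C}$ must be assumed to lie in $\mathcal{C}$ (e.g.\ $\mathcal{C}$ closed) for the hypothesis on equilibria to apply, and the bounded set $\mathcal{U}_{x_0}$ must have compact closure for the limit set to be nonempty; neither is a defect specific to your write-up.
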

\begin{proof}
Since $\psi(t,x_0)$ belongs to the bounded set
$\mathcal{U}_{x_0}$ for each $t\geq 0$, by \cite[Lemma 4.1]{Khalil3}  
it converges to its $\omega$-limit set, given by the compact and forward
invariant set $\omega^+(x_0):=
\{x\in\mathcal{M}\,|\, x = \lim\nolimits\limits_{n\to\infty} \psi(t_n,x_0) 
\mbox{ where } t_n\in \real_{\geq 0} \to \infty \mbox{ as }n\to \infty  \}$.
Note that if $\lim\nolimits\limits_{t\to\infty}\psi(t,x_0) = x^*\in\mathcal{C}$
then, by hypothesis, $x^*$ belongs to $\mathcal{A}\subset\mathcal{U}$.
Therefore $\omega^+(x_0)\setminus\mathcal{A}$ does not contains equilibria.
We prove by contradiction that $\omega^+(x_0) \subseteq \mathcal{A}$.

Suppose that $\omega^+(x_0)\cap \mathcal{A} = \emptyset $.
By compactness of $\omega^+(x_0)$, the definition of
$\mathcal{H}_x$, and the continuity of $V$, 
there exist $m>0$ such that  
$m \leq V(x,f(x))$ for each $x\in\omega^+(x_0)$. 
Consider the solution $\psi(\cdot,x)$ whose
initial condition 
$x\in\omega^+(x_0)$. 
From \eqref{eq:diff_contraction}, the definition $\mathcal{H}_x$, and
the fact that $\alpha$ is a function of class $\mathcal{K}$, 
there exists a class $\mathcal{KL}$ function $\beta$ such that
$ m \leq 
\lim\nolimits\limits_{t\to\infty} 
V(\psi(t,x),f(\psi(t,x))) 
\leq 
\lim\nolimits\limits_{t\to\infty} 
\beta(V(\psi(0,x),f(\psi(0,x))), t) =0 $.
A contradiction.

Suppose that $\omega^+(x_0)\cap \mathcal{A} \neq \emptyset $
and $\omega^+(x_0)\not\subseteq \mathcal{A}$. By the same argument used above,
there exists a sequence of $t_k \in \real_{\geq 0}$ such that $t_k \to \infty$ as
$k \to \infty$ such that  $V(\psi(t_k,x),f(\psi(t_k,x))) \geq m > 0$ but
$\lim\nolimits\limits_{k\to \infty} V(\psi(t_k,x),f(\psi(t_k,x))) \leq 
\lim\nolimits\limits_{k\to\infty} 
\beta(V(\psi(0,x),f(\psi(0,x))), t_k) =0 $.
A contradiction.
\end{proof}

\section{Conclusions}
\label{sec:conclusion}

The paper introduces a differential Lyapunov framework for the analysis of incremental stability,
a property of interest in a number applications of nonlinear systems theory. Our main result
extends the classical Lyapunov theorem from stability to incremental stability by lifting the Lyapunov
function in the tangent bundle. In addition to classical Lyapunov conditions, Finsler-Lyapunov functions
endow the state space with a Finsler differentiable structure. Through integration along curves,  
the construction of a Finsler-Lyapunov function, a local object, implicitly provides the 
construction of  a decreasing distance between solutions, a global object.

The study of global distances through local metrics is the essence of Finsler geometry, 
a generalization of  Riemannian geometry. Several examples and applications in the paper 
suggest that the Finsler differentiable structure is indeed the natural framework for 
contraction analysis, unifying in a natural way earlier contributions restricted either to a Riemannian
framework \cite{Lohmiller1998,Rouchon2003} or to matrix measures of contraction \cite{Russo10,Sontag10yamamoto}. 
In the same way, the formulation of the results on differentiable manifolds
rather than in Euclidean spaces is not for the mere sake of generality but 
motivated by the  fact that global incrementally stability questions arising 
in applications involve nonlinear spaces as a rule rather than as an exception.

A central motivation to bridge Lyapunov theory and contraction analysis is to provide contraction analysis
with the whole set of system-theoretic tools derived from Lyapunov theory. The present paper only illustrates 
this program with LaSalle's Invariance principle but we expect many further generalizations of Lyapunov theory 
to carry out in the proposed framework. This includes the use of asymptotic methods such as averaging 
theory or singular perturbation theory (see e.g. the result \cite{DelVecchio2012} ), and, most importantly, the use of 
contraction analysis for the study of open and interconnected systems.
The original motivation for the present paper was to develop a differential 
framework for incremental dissipativity \cite{Angeli09,Jouffroy03asimple,Stan2007}
- \emph{differential dissipativity} - 
which will be the topic of a separate paper
(see e.g. \cite{Forni2013,Forni2013a} for
preliminary results developed while the current paper
was under review).

Although a straightforward extension of contraction, the concept of  horizontal contraction introduced in this paper
illustrates the potential of contraction analysis in areas only partially explored to date.  Primarily, it provides the natural
differential geometric framework to study contraction in systems with symmetries, disregarding variations in the symmetry directions
where no contraction is expected. Problems such as synchronization, coordination, observer design, and tracking all involve
a notion of horizontal contraction rather than contraction. The notion of forward contraction, which corresponds to the particular
case of selecting the vector field to span the horizontal distribution, connects the proposed framework 
to an entirely distinct theory which seeks to characterize asymptotic behaviors by Bendixson type of criteria, 
excluding periodic orbits or forcing convergence to equilibrium sets \cite{Li93}. 

Overall, we anticipate a number of  interesting developments beyond the basic theory presented in this paper and we hope that
the proposed differential framework will facilitate further bridges between differential geometry and Lyapunov theory, a continuing
source of inspiration for nonlinear control. 
\vspace{1mm}

\emph{Acknowledgements.} We thank J. Mierczy{\'n}ski for pointing us to 
the early references \cite{Mierczynski1987,Mierczynski1991,Mierczynski1995}
during the revision process of the paper. We thank also the anonymous reviewers. Their
comments were important to improve the original version of the paper.
\vspace{-1mm}
 \bibliographystyle{plain}

\end{document}